\newcommand{\hidestuff}[1]{}
\title{Hybrid $k$-Clustering: Blending $k$-Median and $k$-Center}
\titlerunning{Hybrid $k$-Clustering: Blending $k$-Median and $k$-Center}
\author{Fedor V. Fomin}{University of Bergen, Norway}{Fedor.Fomin@uib.no}{https://orcid.org/0000-0003-1955-4612}{Supported by the Research Council of Norway via the project  BWCA (grant no. 314528).}
\author{Petr A. Golovach}{University of Bergen, Norway}{Petr.Golovach@ii.uib.no}{https://orcid.org/0000-0002-2619-2990}{Supported by the Research Council of Norway via the project  BWCA (grant no. 314528).}
\author{Tanmay Inamdar}
{Indian Institute of Technology Jodhpur, Jodhpur, India}{taninamdar@gmail.com}{https://orcid.org/0000-0002-0184-5932}{}
\author{Saket Saurabh}{The Institute of Mathematical Sciences, HBNI, Chennai, India  \and University of Bergen, Norway }{saket@imsc.res.in}{https://orcid.org/0000-0001-7847-6402}{The author is supported by the European Research Council (ERC) under the European Union's Horizon 2020 research and innovation programme (grant agreement No. 819416); and he also acknowledges the support of Swarnajayanti Fellowship grant DST/SJF/MSA-01/2017-18.}
\author{Meirav Zehavi}{Ben-Gurion University of the Negev, Beer-Sheva, Israel}{meiravze@bgu.ac.il}{https://orcid.org/0000-0002-3636-5322}{The research was supported by the European Research Council (ERC) grant no. 101039913 (PARAPATH).}
\authorrunning{F. V. Fomin, P. A. Golovach, T. Inamdar, S. Saurabh, M. Zehavi}
\keywords{clustering, $k$-center, $k$-median, Euclidean space, fpt approximation} 
\crefname{invar}{invariant}{invariants}
\crefname{ineq}{inequality}{inequalities}
\crefname{constr}{constraint}{constraints}
\crefname{tbl}{table}{tables}
\crefname{lem}{lemma}{lemmata}
\crefname{lemma}{lemma}{lemmata}
\crefname{cond}{condition}{conditions}
\newcommand{\red}[1]{{\color{red} #1}}
\newcommand{\blue}[1]{{\color{blue} #1}}
\newcommand{\lr}[1]{\left( #1\right)}
\newcommand{\LR}[1]{\left\{ #1\right\}}
\newcommand{\Oh}{\mathcal{O}}
\newcommand{\cI}{\mathcal{I}}
\newcommand{\OPT}{{\sf OPT}}
\renewcommand{\d}{\mathsf{dist}}
\newcommand{\fac}{\mathbb{F}}
\newcommand{\cli}{P}
\newcommand{\cen}{F}
\newcommand{\real}{\mathbb{R}}
\newcommand{\cost}{\mathsf{cost}}
\newcommand{\opt}{{\sf OPT}}
\newcommand{\grid}{{\sf Grid}}
\newcommand{\cluster}{\mathsf{cl}}
\newcommand{\new}{\text{new}}
\newcommand{\Nnear}{N_{\text{near}}}
\newcommand{\Nfar}{N_{\text{far}}}
\newcommand{\dmax}{\d_{\max}}
\newcommand{\Lnear}{L_{\text{near}}}
\newcommand{\Lfar}{L_{\text{far}}}
\newcommand{\kmed}{$k$-\textsc{Median}\xspace}
\newcommand{\kcenter}{$k$-\textsc{Center}\xspace}
\newcommand{\probname}{{\sc Hybrid $k$-Clustering}\xspace}
\newtcolorbox{mybox}[1]{colback=white!5!white,colframe=gray!75!black,colbacktitle=white!5!white,coltitle=black!70!black,sharp corners=all,title={#1}}
\begin{document}
\maketitle

\begin{abstract}
We propose a novel clustering model encompassing two well-known clustering models: $k$-center clustering and $k$-median clustering. In the \probname problem, given a set $P$ of points in $\Bbb{R}^d$, an integer $k$, and a non-negative real $r$, our objective is to position $k$ closed balls of radius $r$ to minimize the sum of distances from points not covered by the balls to their closest balls. Equivalently, we seek an optimal $L_1$-fitting of a union of $k$ balls of radius $r$ to a set of points in the  Euclidean space. When $r=0$, this corresponds to \textsc{$k$-median}; when the minimum sum is zero, indicating complete coverage of all points, it is \textsc{$k$-center}.

Our primary result is a bicriteria approximation algorithm that, for a given $\varepsilon>0$, produces a hybrid $k$-clustering with balls of radius $(1+\varepsilon)r$. This algorithm achieves a cost at most $1+\varepsilon$ of the optimum, and it operates in time $2^{(kd/\varepsilon)^{\Oh(1)}} \cdot n^{\Oh(1)}$. Notably, considering the established lower bounds on $k$-center and $k$-median, our bicriteria approximation stands as the best possible result for \probname.
\end{abstract}

\section{Introduction}

Suppose we want to install a set of $k$ access points (APs) at certain locations to provide wireless internet (Wi-Fi) coverage to a group of people belonging to a certain area. Each AP is capable of providing Wi-Fi within a circular-shaped region (i.e., a \emph{disk}) of fixed radius $r$, and it may not be possible to cover the entire region with $k$ such disks. Thus, after placing $k$ APs, some people may be \emph{outliers}, that lie outside any of the $k$ disks and do not receive Wi-Fi coverage. We can model this scenario as the classical {\sc $k$-Center with Outliers} problem, which is a crude model since it only cares about the \emph{number} of outliers. However, our scenario is more nuanced. All people that lie within any of the $k$ disks of radius $r$ already receive Wi-Fi, whereas a person lying outside all of the $k$ disks must travel to the boundary of the nearest disk in order to receive coverage. Naturally, we would like to minimize the total distance traveled by people. Motivated by this and several other problems in computational geometry/clustering, we consider the following clustering problem, which encompasses two fundamental variants of clustering: \kcenter and \kmed. Given a set $P$ of points in some metric space and integer $k$ and real $r\geq 0$, our objective is to position $k$ closed balls of radius $r$ in a way that minimizes the sum of distances from points uncovered by the balls to their closest balls. In \Cref{fig-example}, we provide an example of such clustering with $k=2$ and $r=2$.

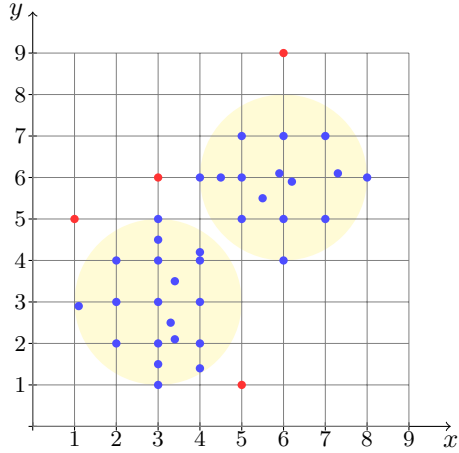
\begin{figure}
	\centering
\begin{tikzpicture}[scale=0.55]
      \fill[yellow!20!white] (3,3) circle (2cm);
        \fill[yellow!20!white] (6,6) circle (2cm);
  \draw[->] (0,0) -- (10,0) node[below] {$x$};
  \draw[->] (0,0) -- (0,10) node[left] {$y$};

  \foreach \x in {0,1,2,3,4,5,6,7,8,9}
     \draw (\x,-0.1) -- (\x,0.1);
   \foreach \y in {0,1,2,3,4,5,6,7,8,9}
    \draw (-0.1,\y) -- (0.1,\y);

  \foreach \x in {1,2,3,4,5,6,7,8,9}
    \node at (\x,-0.3) { {\footnotesize $\x$}};
  \foreach \y in {1,2,3,4,5,6,7,8,9}
    \node at (-0.3,\y) { {\footnotesize $\y$}};

    \draw[step=1cm,gray,very thin] (0,0) grid (9,9);

 \foreach \x in {2,3,4} \fill[blue!70!white] (2,\x) circle (0.1cm) ;
       \foreach \x in {1.4,2,3,4,4.2}    \fill[blue!70!white] (4,\x) circle (0.1cm) ;
    \foreach \x in {1,1.5,2,3,4,4.5,5}    \fill[blue!70!white] (3,\x) circle (0.1cm) ;  
    
   \foreach \x in {5,6,7} \fill[blue!70!white] (5,\x) circle (0.1cm) ;
       \foreach \x in {4,5,7}    \fill[blue!70!white] (6,\x) circle (0.1cm) ;
    \foreach \x in {5,7}    \fill[blue!70!white] (7,\x) circle (0.1cm) ;  
     \fill[blue!70!white] (8,6) circle (0.1cm) ;  
       \fill[blue!70!white] (4,6) circle (0.1cm) ;  
         \fill[blue!70!white] (4.5,6) circle (0.1cm) ;   
           \fill[blue!70!white] (5.5,5.5) circle (0.1cm) ;   
             \fill[blue!70!white] (7.3,6.1) circle (0.1cm) ;   
                     \fill[blue!70!white] (3.3,2.5) circle (0.1cm) ;       
                      \fill[blue!70!white] (3.4,2.1) circle (0.1cm) ;  
                      \fill[blue!70!white] (1.1,2.9) circle (0.1cm) ;  
                       \fill[blue!70!white] (3.4,3.5) circle (0.1cm) ;  
                               \fill[blue!70!white] (5.9,6.1) circle (0.1cm) ;  
                                 \fill[blue!70!white] (6.2,5.9) circle (0.1cm) ; 
  \fill[red!80!white] (3,6) circle (0.1cm) ;        
   \fill[red!80!white] (1,5) circle (0.1cm) ;      
      \fill[red!80!white] (5,1) circle (0.1cm) ;              
              \fill[red!80!white] (6,9) circle (0.1cm) ;         
    
\end{tikzpicture} 
\caption{\small Two disks of radius $2$ cover all except four points that are colored red. The total sum of distances from these points to the yellow disks is $2(1+ \sqrt{8}-2)$.}\label{fig-example}
\end{figure}

To define the new clustering formally, we need some definitions. We consider Euclidean inputs, i.e., all points belong to $\real^d$ for some $d \ge 1$ and the distance function $\d(\cdot, \cdot)$ is given by the Euclidean ($\ell_2$) distance. For a point $p \in P$ and a finite set of points $Q \subset \real^d$, we define $\d(p, Q) \coloneqq \min_{q \in Q} \d(p, q)$. Further, for $x, y \in P$, and a real $r \ge 0$, we define the shorthand $\d_r(x, y) \coloneqq \max\LR{\d(x, y)-r, 0}$.  

\begin{mybox}{\probname}
	\textbf{Input.} A set $\cli \subset \mathbb{R}^d$ of $n$ points, an integer $k \ge 1$, and a real $r \ge 0$.
	\\\textbf{Task.} Find a set $\cen \subset \mathbb{R}^d$ of size at most $k$, that minimizes:
	\begin{equation}
		\cost_r(\cli, \cen) \coloneqq \sum_{p \in \cli} \d_r(p, \cen) \label{eqn:objective}
	\end{equation}
\end{mybox}
We denote an instance of \probname as $\cI = (\cli, k, r, d)$, where $d$ denotes the dimension. When $r = 0$, the optimal cost of \probname equals the optimal \kmed clustering cost of the instance. Thus in this case, 
\probname reduces to \kmed.  
However,  when $r > 0$,  $\d_r(\cdot, \cdot)$ does not form a metric, and hence we cannot simply reduce the problem to \kmed.
On the other hand, the minimum value $r$ that guarantees the cost of \probname to be zero  
is equal to the optimal \kcenter value. In this sense, \probname reduces to  \kcenter. 

\begin{figure}[ht]
	\centering
	\includegraphics[scale=0.85,page=5]{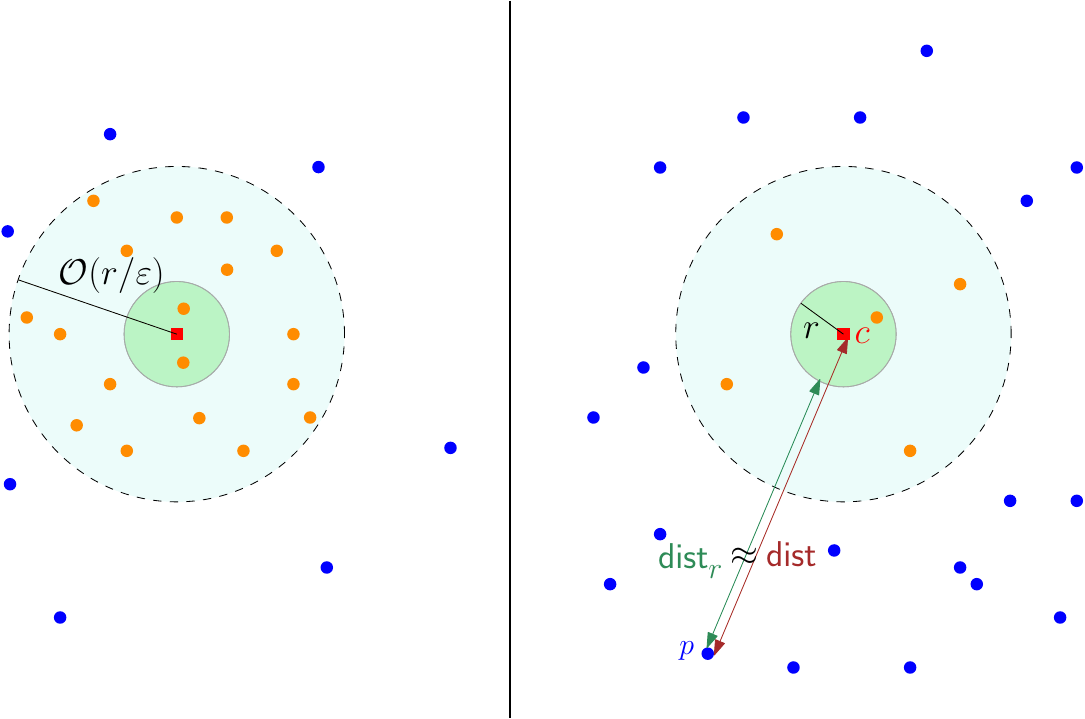}
	\caption{\small Left: \kcenter clustering, a special case of \probname with $r = r^\star$. All points are covered by $k$ balls of radius $r^\star$ and $\opt_{r^\star} = 0$.
	Right: \kmed clustering, a special case of \probname with $r = 0$, and every point contributes its distance to the closest center (some are shown as brown arrows). Middle: A general instance of \probname lies somewhere \emph{in between} the two cases, where points outside radius-$r$ balls contribute the distance to the boundary (shown in blue).}
\end{figure}

\subsection{Our Result and Techniques}
The main result of this paper is a  bicriteria approximation algorithm for \probname.
An $\alpha$-approximation to an instance $\cI = (\cli, k, r, d)$ is a subset $\cen \subset \real^d$ of size $k$ with $\cost_r(\cli, \cen) \le \alpha \cdot \opt_r$, where $\opt_r \coloneqq \cost_r(\cli, \cen^*)$ denotes the cost of an optimal solution $\cen^* \subset \real^d$ of size at most $k$. Furthermore, an $(\alpha, \beta)$-bicriteria approximation is a solution $\cen \subset \real^d$ with $\cost_{\beta r}(\cli, \cen) \le \alpha \cdot \opt_r$. Here, $\cost_{\beta r}(\cli, \cen) = \sum_{p \in \cli} \d_{\beta r}(\cli, \cen)$.

Consider the special case of $r = r^*$, where $r^*$ is the optimal radius for \kcenter. Then, $\opt_{r^*} = 0$. Therefore, a $(\alpha, 1)$-bicriteria approximation would return a solution of cost $\alpha \cdot \opt_{r^*} = 0$ using radius $1 \cdot r^*$, i.e., an optimal solution for \kcenter. On the other hand, a $(1, \beta)$-bicriteria approximation, for the special case of $r = 0$, would return an optimal-cost solution using the radius of $\beta r = 0$. That is, such an algorithm would optimally solve \kmed. Combining these observations with the established lower bounds from the literature for \kmed and \kcenter in Euclidean spaces, implies the following bounds for \probname.
\begin{proposition} \label{obs:lowerbounds}\ 
	The following holds for \probname even when the input is from $\real^2$.
	\begin{itemize}
		\item For any $\alpha \ge 1$, there exists no $FPT$ in  $k$ algorithm that returns an $(\alpha, 1)$-approximation, unless FPT $= \mathsf{W}[1]$ \cite{Marx05}.
		\item For any finite $\beta \ge 1$, there exists no polynomial-time algorithm that returns a $(1, \beta)$-approximation unless $\mathsf{P = NP}$ \cite{megiddo1984complexity}. 
	\end{itemize}
	Further, assuming the Exponential-Time Hypothesis (ETH), if the input is from $\real^d$ with $d \ge 4$, then there exists no $n^{o(k)}$ time algorithm that returns a $(1, \beta)$-approximation, for any finite $\beta \ge 1$ \cite{CASODA18}. 
\end{proposition}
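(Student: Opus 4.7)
The plan is to establish each lower bound by reducing from known hardness results for \kcenter and \kmed via extremal choices of the radius parameter $r$. The key observation, already hinted at in the paragraph preceding the proposition, is that \probname naturally interpolates between \kcenter (when $r$ equals the optimal \kcenter radius $r^\star$, so $\opt_r = 0$) and \kmed (when $r = 0$, so $\d_r(\cdot,\cdot) = \d(\cdot,\cdot)$). By plugging in extremal $r$, any bicriteria algorithm for \probname degenerates into an algorithm for one of these two endpoint problems.

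For the first claim, I would reduce from the decision version of Euclidean \kcenter in $\real^2$, which is W[1]-hard parameterized by $k$ by Marx \cite{Marx05}. Given an input point set $\cli$ and a candidate target radius $r$, I feed the instance $(\cli, k, r, 2)$ to the hypothetical $(\alpha, 1)$-approximation. If $r \ge r^\star$, then $\opt_r = 0$, and the algorithm must return a set $\cen$ with $\cost_r(\cli, \cen) \le \alpha \cdot 0 = 0$, meaning $\cen$ is a \kcenter solution of radius $r$. If $r < r^\star$, then no $k$ balls of radius $r$ can cover all of $\cli$, so any output has strictly positive cost regardless of $\alpha$. Thus the algorithm decides \kcenter exactly in FPT time in $k$, a contradiction.

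For the second and third claims, I would simply set $r = 0$. Since $\beta \cdot 0 = 0$ for any finite $\beta$, the bicriteria slack collapses: the algorithm's output $\cen$ satisfies $\cost_{\beta r}(\cli, \cen) = \cost_0(\cli, \cen) \le 1 \cdot \opt_0$, which is exactly an optimal \kmed solution for $\cli$. In $\real^2$ the existence of such a polynomial-time algorithm contradicts the NP-hardness result of Megiddo and Supowit \cite{megiddo1984complexity}, and in $\real^d$ with $d \ge 4$ the existence of an $n^{o(k)}$-time algorithm contradicts the ETH lower bound of Cohen-Addad et al.\ \cite{CASODA18}.

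The only subtlety worth checking, and what I would flag as the main (minor) obstacle, is ruling out a loophole in the $\beta$-slack: one might worry that enlarging balls by a factor $\beta > 1$ genuinely helps. The $r = 0$ trick sidesteps this entirely, since any scalar multiple of $0$ remains $0$. Apart from that, the argument is just bookkeeping: the Euclidean instances produced by the cited hardness reductions are valid \probname inputs the moment we specify $r$, so no additional technical work is needed.
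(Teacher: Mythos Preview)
Your proposal is correct and matches the paper's approach essentially exactly: the paper's ``proof'' is the paragraph immediately preceding the proposition, which sets $r = r^\star$ to collapse an $(\alpha,1)$-approximation to an exact \kcenter solver, and sets $r = 0$ so that $\beta r = 0$ collapses a $(1,\beta)$-approximation to an exact \kmed solver, then invokes the cited hardness results. Your write-up is slightly more detailed (explicitly handling the $r < r^\star$ branch for the decision version), but the underlying argument is identical.
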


Given these results, a natural question arises: \emph{Can we achieve a $(1+\varepsilon, 1+\varepsilon)$-approximation for \probname, running in time $f(k, \varepsilon) \cdot n^{\Oh(1)}$, particularly in low-dimensional Euclidean spaces?}
Our main theorem answers this question. 

\begin{restatable}{theorem}{fptasthm} \label{thm:fptas}
	Let $0 < \varepsilon < 1$. There exists a randomized algorithm that, given an instance of \probname in $\real^d$, runs in time $2^{\lr{\frac{kd}{\varepsilon}}^{\Oh(1)}} \cdot n^{\Oh(1)}$, and returns a $(1+\varepsilon, 1+\varepsilon)$-approximation with probability at least a positive constant.
\end{restatable}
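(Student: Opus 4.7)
The plan is to follow the coreset-plus-enumeration template used for FPT approximation of Euclidean \kmed and \kcenter, adapted to the hybrid cost $\d_r$. First I would construct a (weighted) coreset $S\subseteq\cli$ of size $N = \mathrm{poly}(k,d,\varepsilon^{-1})$ such that for every $\cen\subset\real^d$ with $|\cen|\le k$ and every $r'\in[(1-\varepsilon)r,(1+\varepsilon)r]$, one has $\cost_{r'}(S,\cen) = (1\pm\varepsilon)\,\cost_{r'}(\cli,\cen)$. The natural route is Feldman--Langberg sensitivity sampling applied to the function class $\{p\mapsto \d_{r'}(p,\cen):|\cen|\le k\}$: since $\d_{r'}(p,\cen)\le \d(p,\cen)$, the per-point sensitivities are dominated by those of Euclidean \kmed, and the relevant pseudo-dimension is still $O(kd\log(k/\varepsilon))$, so a polynomial-size coreset exists. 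Robustness to $\varepsilon$-perturbations of $r$ can be arranged by building the coreset simultaneously for a finite geometric grid of radii.

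Given the coreset, the continuous search for $k$ centers is reduced to a discrete one. For each $s\in S$, I would overlay an $(\varepsilon r/\sqrt d)$-spaced grid on the ball of radius $(1+\varepsilon^{-1})r$ around $s$, together with geometrically growing grids to capture far centers; the resulting candidate set $\cU\subset\real^d$ has size $(Nd/\varepsilon)^{O(d)}$ and contains some $k$-subset that realizes cost $(1+\varepsilon)\opt_r$ with radius $(1+\varepsilon)r$. To avoid the $\binom{|\cU|}{k}$ blow-up I would guess the partition of $S$ induced by the optimal solution, incurring $k^N = 2^{\mathrm{poly}(kd/\varepsilon)}$ overhead, and for each guessed cluster locate its center by a randomized Kumar--Sabharwal--Sen style averaging: sample $O(\varepsilon^{-1})$ points from the cluster and snap their centroid to the local grid, obtaining a $(1+\varepsilon)$-approximate center with constant probability.

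The main obstacle is the clusters that are entirely \emph{covered} by their optimal ball and thus contribute zero to $\opt_r$. For such clusters the centroid argument is vacuous: the objective contains no term forcing the sample average to lie near the true center, yet we still need a radius-$(1+\varepsilon)r$ ball that covers every point. I plan to treat covered clusters separately by guessing a constant-size set of extremal ``witnesses'' from the coreset that $(1+\varepsilon)$-approximate the cluster's minimum enclosing ball, and placing the center at the center of that ball rounded to the local grid; this adds only $N^{O(1)}$ further enumeration. A secondary technical point is the discontinuity of $\d_r$ at the boundary of the radius-$r$ ball, which is precisely what forces the bicriteria relaxation: points whose distance to their optimal center lies in $[(1-\varepsilon)r,(1+\varepsilon)r]$ each contribute at most $O(\varepsilon r)$ to $\cost_r$ and can be charged fractionally against $\opt_r$, so flipping their ``covered/paying'' classification costs at most an $\varepsilon$ factor. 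Putting it all together, the algorithm builds the coreset, enumerates partitions and candidate centers, evaluates $\cost_{(1+\varepsilon)r}(\cli,\cdot)$ for each candidate, and returns the best set of $k$ centers; correctness follows from the coreset guarantee applied at radius $(1+\varepsilon)r$ combined with the sampling success probability, and the total running time is $n^{O(1)} + 2^{\mathrm{poly}(kd/\varepsilon)}$, matching the theorem.
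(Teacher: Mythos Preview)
Your proposal has a genuine gap at its foundation: the existence of a $\mathrm{poly}(k,d,\varepsilon^{-1})$-size coreset for the hybrid cost $\cost_{r'}$ is \emph{not} established, and your one-line justification does not survive scrutiny. You argue that since $\d_{r'}(p,\cen)\le \d(p,\cen)$, the sensitivities are dominated by those of \kmed. But sensitivity is $\sup_{\cen}\d_{r'}(p,\cen)/\cost_{r'}(\cli,\cen)$, and the denominator shrinks at least as much as the numerator---possibly all the way to zero. Concretely, whenever $\opt_r=0$ (the \kcenter regime) every sensitivity is infinite, and more generally when $\opt_r$ is tiny compared to $nr$ the total sensitivity blows up, so Feldman--Langberg gives no useful sample size. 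The paper itself flags coresets for \probname as an open problem precisely because ``\emph{a priori} we do not know which points belong inside the radius-$r$ balls (and thus contribute $0$ to the cost)''.

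This same issue undermines your handling of covered clusters. You propose to locate them by ``guessing a constant-size set of extremal witnesses from the coreset'', but a coreset built by importance sampling with respect to $\cost_{r'}$ need not contain \emph{any} point from a cluster that contributes zero, so there are no witnesses to guess. Likewise, your charging argument for boundary points---that each contributes $O(\varepsilon r)$ and can be absorbed into $\opt_r$---fails exactly when $\opt_r \ll n\varepsilon r$, i.e., in the \kcenter-like regime. The paper's proof avoids all of this by a different route: it first \emph{preprocesses} to force $\frac{\varepsilon\opt_r}{2n}\le r\le \opt_r$ (Lemmas~\ref{lem:kcenterlemma} and~\ref{lem:kmedlemma}), then runs a recursive sampling algorithm \`a la Kumar--Sabharwal--Sen that, at each step, branches on whether the next unseen optimal cluster is \emph{$1$-center-like} (locate via a grid around a sampled nearby point) or \emph{$1$-median-like} (locate via approximate $1$-median on a uniform sample). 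No coreset is used, and the grid steps are what introduce the exponential dependence on $d$.
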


This randomized algorithm and the proof of correctness are described in \Cref{sec:euclidean}. Here we discuss some of the main ideas. Recall that our objective, as the problem name suggests, is a ``hybrid'' of \kcenter and \kmed. In our preprocessing steps, we first handle the inputs that behave \emph{almost like} either of the two problems. Suppose we (approximately) know the optimal value of \probname for the given set of points $P$, called $\opt_r$. First, in \Cref{lem:kcenterlemma}, if $r > \opt_r$, then we show that an approximate solution can be found using techniques used for approximating \kcenter. Specifically, for each of the $k$ centers in the optimal solution, we find a ``nearby'' center within distance $\epsilon r$ via overlaying a fine grid in the space. Thus, we can assume that $r \le \opt_r$. Next, we consider the case when $r$ is \emph{too small} compared to $\opt_r$, namely, when $r < \frac{\varepsilon \opt_r}{n}$, and show that in this case, the input behaves like \kmed -- an approximate \kmed solution is also an approximation for \probname (\Cref{lem:kmedlemma}). In this manner, we preprocess to handle inputs that resemble \kcenter and \kmed, we obtain a relation between $r$ and $\opt_r$, which can be used to discretize the distances, which can be used to bound the aspect ratio (i.e., the ratio of maximum to minimum positive distance) (\Cref{lem:aspectratio}).

After the preprocessing step, we obtain inputs that are not \emph{immediately reducible} to \kcenter/\textsc{Median}. To handle such inputs, we design an intricate recursive algorithm that, at each step, tries to simultaneously handle parts (i.e., clusters) of the input that can be handled by either of the two techniques. This algorithm is inspired by the sampling approach of Kumar, Sabharwal, and Sen~\cite{KumarSS05,KumarSS10} (also Jaiswal, Kumar, and Sen~\cite{JKS2014}). In this approach, one first takes a large enough sample that can be used to pin down the location of the largest cluster center. Then, one removes enough points from the vicinity of this center, so that the next largest cluster becomes dominant, and hence a subsequent sample contains sufficiently many points from the second cluster, and so on.

\begin{figure}[t]
	\centering
	\includegraphics[scale=0.6,page=1]{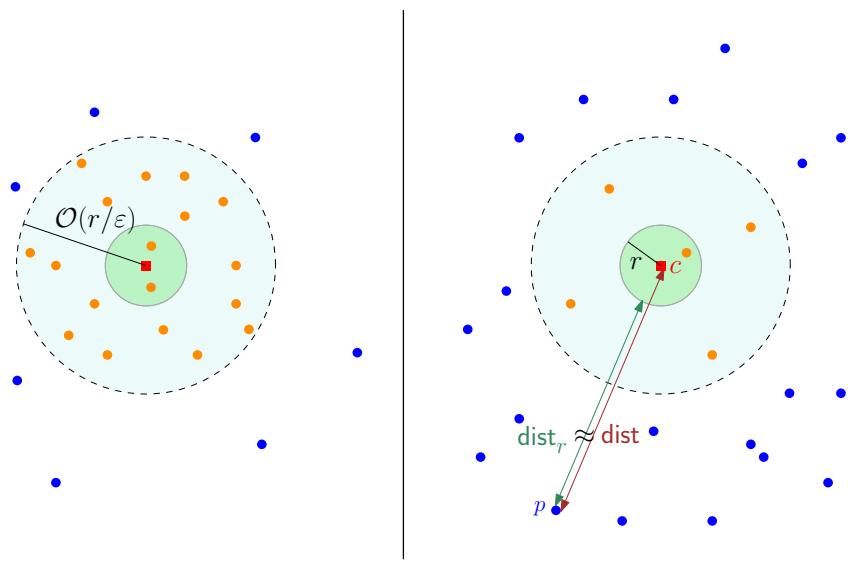}
	\caption{\small Example of two different types of clusters. In each figure, we show the cluster center in \red{red}, a ball of radius $r$ around the center in green, and a larger ball of radius $\Oh(r/\epsilon)$ in cyan with a dashed outline. Left: A \emph{$1$-center-like} cluster. Note that a large chunk of points lies within the radius $\Oh(r/\varepsilon)$ ball around the center. Right: A \emph{$1$-median-like} cluster. Note that most of the points lie outside the $\Oh(r/\epsilon)$ radius ball around $\red{c}$, and for any such point, e.g., \blue{$p$} that is outside the $\Oh(r/\varepsilon)$ radius ball, $\d_r(\blue{p}, \red{c}) \approx \d(\blue{p}, \red{c})$.} \label{fig:introclusters}
\end{figure} 

However, our scenario is more intricate and challenging for several reasons due to the peculiar nature of the objective. Nevertheless, in principle, one can classify each cluster as either being more \emph{$1$-center-like}, or more \emph{$1$-median like} (see \Cref{fig:introclusters} for an illustration). In a \emph{$1$-center-like} cluster, a large fraction of points lie within a ball of radius $\Oh(r/\varepsilon)$. On the other hand, in a \emph{$1$-median-like} cluster, a vast majority of points lie outside the $\Oh(r/\varepsilon)$-radius ball. Note that any such point loses very little due to the ``$-r$'' term in the clustering cost, i.e., its $\d_r$ and $\d$ values are approximately equal.  Hypothetically, if we knew the partition of the input points into $k$ clusters, then we could use this classification to handle each type of cluster separately -- an almost-optimal center of a \emph{$1$-center-like} cluster can be found using a grid, whereas one can use an approximation for $1$-median (as a black box) to handle a \emph{$1$-median-like} cluster. However, the actual clusters are obviously unknown to the algorithm. Hence, the algorithm has to carefully navigate between the two types of clusters based on the random sample obtained, and must simultaneously handle both scenarios using branching (i.e., recursion). The analysis of the algorithm is also much more involved due to the various cases in which the distinction between two types of clusters is murkier. Nevertheless, we are able to show that the algorithm returns a $(1+\varepsilon, 1+\varepsilon)$-approximation in time $2^{(kd/\varepsilon)^{\Oh(1)}} \cdot n^{\Oh(1)}$ with good probability. Note that we incur an exponential dependence on the dimension $d$ due to ``grid-arguments'' used to handle \emph{$1$-center-like} clusters, unlike the approach of \cite{KumarSS05}. However, such dependence seems unavoidable using our approach.

\subsection{Related Problems}

\subparagraph{Euclidean Clustering.}
An extensive body of literature exists on approximation algorithms for \kcenter and \kmed in the Euclidean space. For \kmed in $\real^d$, Polynomial-Time Approximation Schemes (PTASes) with a running time of $n^{f(\epsilon, d)}$ have been developed, leveraging local search techniques \cite{DBLP:journals/siamcomp/FriggstadRS19, DBLP:journals/siamcomp/Cohen-AddadKM19}. Additionally, various Fixed-Parameter Tractable Approximation Schemes (FPT-AS) with a running time of $f(k, d, \epsilon) \cdot n^{\Oh(1)}$ are known for this problem \cite{Cohen-AddadG0LL19Tight, Cohen-AddadLSS22, KumarSS10,JKS2014}. The dependence on dimension $d$ can be eliminated through dimensionality reduction techniques \cite{MakarychevMR19, Charikar22JL}.

For \kcenter, an FPT-AS was introduced by Agarwal and Procopiuc in \cite{AgarwalP02}, with a runtime of $\Oh(n \log k) + (k/\varepsilon)^{\Oh(dk^{1-1/d})}$ in $\real^d$. Subsequent work by Badoiu, Har{-}Peled, and Indyk \cite{BadoiuHI02} improved the running time to $2^{\Oh(k \log k)/\epsilon^2}$.

In \cite{Tamir01}, Tamir introduces a common generalization of the two clustering problems, namely, \textsc{$\ell$-centrum}. In this problem, one ignores $\ell$ closest points from the cost. Notably, \kmed ignores $0$ points, and \kcenter ignores all but one point. While this problem is related to \probname, their objectives differ. \textsc{Ordered} \kmed, a further generalization of $\ell$-\textsc{Centrum}, also does not align with our objective. Approximation algorithms for this problem and some variants were developed in \cite{AbbasiClustering23, AbbasiRobustClustering23,ByrkaSS18,ChakrabartyS18}.

\subparagraph{$k$-center clustering with outliers.}
In  \kcenter clustering, we are given sets $\cli$  (clients) and $\fac$ (facilities)  of points. Given an integer $k$, the task is to identify $k$ centers $F \subseteq \fac$ minimizing the maximum distance of any point in $\cli$ from its closest center. A popular variant of $k$-center is a formulation that considers outliers. For a selected parameter $x$, up to $x$ points are allowed not to be allocated to any center. A plethora of approximation algorithms for this problem, and the related problems of covering points by disks and minimum enclosing balls with outliers, exist in the literature \cite{BadoiuHI02,BergBM23,CharikarKMN01,gandhi2004approximation,DingYW19,EfratSZ94,Har-PeledM05,Matousek95}. \probname could be seen as a variant of \kcenter with outliers, where
we focus on the sum of distances to outliers rather than their numbers. 

\subparagraph{Shape fitting.} A natural problem arising in machine learning, statistics, data-mining, and many other fields is to fit a shape $\gamma$ to a set of points $P$ in $\Bbb{R}^d$. Har-Peled in~\cite{Har-Peled07} introduces the following formalization of this problem. For a family of shapes $\mathcal{F}$ (points, lines, hyperplanes,  spheres, etc.) we seek for a shape $\gamma\in \mathcal{F}$ with the best fit to $P$. 
The typical criteria for measuring how well a shape $\gamma$ fits a set of points $P$ could be 
the maximum distance between a point of $P$ and its nearest point on $\gamma$ ($L_\infty$-fitting), sum of the distances from $P$ to $\gamma$ ($L_1$-fitting) or the sum of the squares of the distances ($L_2$-fitting). In this setting, \probname is the problem of $L_1$-fitting to a shape from 
$\mathcal{F}$, where $\mathcal{F}$ is the family of shapes defined by unions (not necessarily disjoint) of $k$ balls in $\Bbb{R}^d$. Some relevant work in this direction includes  \cite{AgarwalHY08,Har-Peled07,Har-PeledW04,YuAPV08}.

\section{Bicriteria FPT Approximation Scheme in Euclidean Spaces} \label{sec:euclidean}

We first set up some notation and define an important subroutine. For $Y \subset \real^d$, and $y \in Y$, let $\cluster(y, Y) \subseteq \cli$ denote the subset of points of $\cli$, whose closest point in $Y$ is $y$. Ties are broken arbitrarily. Note that $\LR{\cluster(y, Y): y \in Y}$ forms a partition of $\cli$. 

Let $p \in \real^d$ be a point and $\lambda \ge 0$, let $B(p, \lambda) = \LR{ q \in \real^d : d(p, q) \le \lambda}$ denote the ball of radius $\lambda$ centered at $p$. For $0\le \tau \le \lambda$, let $\grid(p, \lambda, \tau) $ be the outcome of the following procedure: we place a grid of sidelength $\tau/\sqrt{d}$ (of arbitrary offset). From each grid cell $L$ that (partially) intersects with $B(p, \lambda)$ (i.e., $L$ contains a point $q$ with $d(p, q) \le \lambda$),  we pick an arbitrary point from $L$ and add it to the set $\grid(p, \lambda, \tau)$. Note that $\grid(p, \lambda, \tau)$ can be computed in time proportional to the size of the output. We have the following observations that follow from simple geometric arguments.

\begin{observation} \label{obs:grid}~
	\begin{enumerate}
		\item $|\grid(p, \lambda, \tau)| \le \Oh((\sqrt{d}\lambda/\tau)^d)$, where $d$ is the dimension.
		\item For any $q \in B(p, \lambda)$, there exists some $q' \in \grid(p, \lambda, \tau)$ such that $d(q, q') \le \tau$.
	\end{enumerate}
\end{observation}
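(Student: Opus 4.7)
The plan is to prove the two parts separately, each by an elementary geometric counting argument based on the grid structure.

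For part (1), I would bound the number of cells of the axis-aligned grid of sidelength $\tau/\sqrt{d}$ that can intersect the ball $B(p,\lambda)$. The key observation is that $B(p,\lambda)$ is contained in an axis-aligned cube of sidelength $2\lambda$ centred at $p$. The number of grid cells of sidelength $\tau/\sqrt{d}$ that fit inside (or cross the boundary of) such a cube is at most $\lceil 2\lambda/(\tau/\sqrt{d}) \rceil^d = \lceil 2\sqrt{d}\lambda/\tau\rceil^d = \Oh((\sqrt{d}\lambda/\tau)^d)$. Since $\grid(p,\lambda,\tau)$ contains at most one representative point per intersecting cell, its size is bounded by the same quantity.

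For part (2), I would use the fact that any $d$-dimensional cube of sidelength $s$ has diameter exactly $s\sqrt{d}$. Given $q\in B(p,\lambda)$, let $L$ be the grid cell containing $q$. Then $L$ intersects $B(p,\lambda)$, so by construction $\grid(p,\lambda,\tau)$ contains some representative point $q'\in L$. Both $q$ and $q'$ lie in the same cell of sidelength $\tau/\sqrt{d}$, whose diameter is $(\tau/\sqrt{d})\cdot\sqrt{d}=\tau$. Hence $\d(q,q')\le\tau$, as required.

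Neither part presents a real obstacle: both are routine consequences of how the sidelength $\tau/\sqrt{d}$ was chosen, precisely so that the cell diameter equals $\tau$. The only point worth being slightly careful about is that in part (1) one must count all cells of the enclosing cube rather than trying to count only the cells that genuinely meet the ball, since an exact count of the latter is not needed and would only complicate the argument. The sidelength choice $\tau/\sqrt{d}$ is what simultaneously gives the diameter bound in (2) and the $(\sqrt{d}\lambda/\tau)^d$ dependence in (1).
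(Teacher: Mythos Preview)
Your proposal is correct and matches the paper's intent: the paper states the observation without proof, noting only that it ``follows from simple geometric arguments,'' and your counting argument via the enclosing cube for part~(1) and the cell-diameter computation for part~(2) are exactly the standard details being alluded to. There is nothing to add.
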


\subsection{Preprocessing}

Suppose we know an estimate of $\OPT_r$ up to a constant factor -- this can be done by an exponential search or by first finding a bicriteria (constant) approximation. For simplicity of exposition, we assume that we know $\OPT_r$ exactly.

\subparagraph{Step 1.} Obtaining $\opt_r \ge r \ge \frac{\varepsilon \opt_r}{2n}$.

First, in the following lemma, we handle \emph{$k$-center-like} instances, which we can handle using ``grid arguments''. If this is not applicable, we obtain that $r \le \opt_r$.

\begin{lemma}\label{lem:kcenterlemma}
	If $r > \OPT_r$, then in time $\lr{\frac{d}{\varepsilon}}^{\Oh(dk)} \cdot n^{\Oh(1)} $one can find a set $\cen \subset \real^d$ of size $k$, such that, $\cost_{(1+\varepsilon)r}(\cli, \cen) \le (1+\varepsilon) \OPT_r$.
\end{lemma}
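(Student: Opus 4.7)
The plan is to exploit the hypothesis $\OPT_r < r$ to reduce the problem to a grid-enumeration argument of the sort used in standard FPT approximation schemes for Euclidean $k$-center. Two simple observations drive the argument. First, since every $p \in \cli$ satisfies $\d_r(p,\cen^*) \le \OPT_r < r$, we have $\d(p,\cen^*) < 2r$; in particular, the optimal (unrestricted) $k$-center radius of $\cli$ is at most $2r$. Second, by the same inequality, each non-empty optimal cluster contains a point $p \in \cli$ within distance $r+\OPT_r < 2r$ of its center $f^*_i$. Together these give a handle on both where the optimal centers lie and how finely we must discretize space around them.

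First I would run Gonzalez's farthest-first $2$-approximation for $k$-center on $\cli$ in $O(nk)$ time, obtaining $S = \{s_1,\ldots,s_k\} \subseteq \cli$ with $\max_{p \in \cli} \d(p, S) \le 4r$ (using the first observation together with Gonzalez's factor-$2$ guarantee against the unrestricted optimum). Combined with the second observation, every non-empty-cluster optimal center $f^*_i$ lies within distance $2r + 4r = 6r$ of some $s_j \in S$. Next, for each $s_j$ I would compute $G_j := \grid(s_j, 6r, \varepsilon r)$, which by \Cref{obs:grid} has size $(d/\varepsilon)^{\Oh(d)}$ and contains, for every $q \in B(s_j, 6r)$, some $q'$ with $\d(q,q') \le \varepsilon r$. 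Set $C := \bigcup_{j=1}^{k} G_j$, so $|C| \le k \cdot (d/\varepsilon)^{\Oh(d)}$. Finally, enumerate all $k$-subsets of $C$ --- at most $\binom{|C|}{k} \le (e|C|/k)^k = (d/\varepsilon)^{\Oh(dk)}$ of them --- evaluate $\cost_{(1+\varepsilon)r}(\cli,\cen)$ in $\Oh(nk)$ time per subset, and return the cheapest. This gives the claimed running time of $(d/\varepsilon)^{\Oh(dk)} \cdot n^{\Oh(1)}$.

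For correctness, by construction there exists $\cen = \{f_1,\ldots,f_k\} \subseteq C$ with $\d(f_i, f^*_i) \le \varepsilon r$ for every index of a non-empty optimal cluster (padding arbitrarily otherwise). For any $p \in \cli$, letting $i(p) = \argmin_i \d(p, f^*_i)$, the triangle inequality gives $\d(p,\cen) \le \d(p, f_{i(p)}) \le \d(p, \cen^*) + \varepsilon r$, so
\[
\d_{(1+\varepsilon)r}(p,\cen) \;\le\; \max\{\d(p,\cen^*) + \varepsilon r - (1+\varepsilon)r,\, 0\} \;=\; \d_r(p,\cen^*).
\]
Summing over $p \in \cli$ yields $\cost_{(1+\varepsilon)r}(\cli,\cen) \le \OPT_r \le (1+\varepsilon)\OPT_r$, which is even slightly stronger than the statement requires.

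The main obstacle I anticipate is avoiding an $n^k$ factor: the most direct implementation of the ``snap each $f^*_i$ to a nearby grid point'' idea would guess, for each optimal center, a witness point of $\cli$ and then grid around it, incurring $n^k$ witness tuples. The Gonzalez preprocessing step is the key trick to circumvent this; it collapses the $n$ potential witnesses down to exactly $k$ representatives that still cover $\cli$ within radius $\Oh(r)$, after which the grid-enumeration becomes FPT in $k$ with only polynomial dependence on $n$.
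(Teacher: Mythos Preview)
Your proof is correct and follows essentially the same route as the paper's: both observe that $r > \OPT_r$ forces every point of $\cli$ within distance $2r$ of $\cen^\star$, run a $2$-approximation for $k$-center to localize the optimal centers within radius $6r$ of $k$ anchor points, overlay $\grid(\cdot,6r,\varepsilon r)$ around those anchors, and enumerate $k$-subsets. Your correctness paragraph is in fact a bit more explicit than the paper's (which just says ``it is easy to see''), and your observation that the enumeration actually yields $\cost_{(1+\varepsilon)r}(\cli,\cen)\le \OPT_r$ rather than merely $(1+\varepsilon)\OPT_r$ is a nice sharpening.
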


\begin{proof}
	Consider an optimal solution $\cen^\star$ such that $\cost_r(\cli, \cen^\star) = \OPT_r$. We assume that $\cen^\star$ is \emph{minimal}, i.e., no strict subset of $\cen^\star$ also yields the optimal cost. It follows that every point $p \in \cli$ is within a distance of at most $2r$ from some $c \in \cen^\star$, and each $c \in \cen^\star$ has some $p \in \cli$ within distance $2r$. Thus, $\cli$ can be covered using at most $k$ balls of radius $2r$. Using a $2$-approximation for \textsc{$k$-Center}, we find a set $\cen$ such that $\cli \subseteq \bigcup_{c \in \cen}B(c, 4r)$. This implies that $\cen^\star \subseteq \bigcup_{c \in \cen}B(c, 6r)$.
	
	Let $R = \bigcup_{c \in \cen} \grid(c, 6r, \varepsilon r)$.  By \Cref{obs:grid}, $|R| \le k\lr{\frac{\sqrt{d}}{\varepsilon}}^{\Oh(d)}$. Now, we iterate over all subsets $\cen' \subseteq R$ of size at most $k$, and look at the subset $\cen'$ minimizing $\cost_{r}(\cli, \cen')$. The number of such subsets is at most 
	$k\binom{|R|}{k} \le k\lr{\frac{e|R|}{k}}^k \le \lr{\frac{d}{\varepsilon}}^{\Oh(dk)}$.
	Also, by \Cref{obs:grid}, there exists a subset $\cen \subseteq R$ of size at most $k$, such that for each $c^\star \in \cen^\star$, there exists some $c' \in \cen$ such that $\d(c^\star, c') \le \varepsilon r$. It is easy to see that $\cen$ satisfies the claimed properties.
\end{proof}

In the following lemma, we handle \emph{$k$-median-like} instances, where $r$ is very small compared $\OPT_r$. We directly reduce such instances to \kmed (where $r = 0$). If this is not applicable, then we obtain that $r$ is not ``too small'' compared to $\OPT_r$.

\begin{lemma}\label{lem:kmedlemma}
	Let $0 < \varepsilon < 1$. 
	Suppose for an instance $\cI$, $\opt_r \ge \frac{2nr}{\varepsilon}$. Then, $\opt_0 \le (1+\varepsilon/2) \cdot \opt_r \le (1+\varepsilon/2) \cdot \opt_0$. Furthermore, if $\cen \subset \real^d$ satisfies that $\cost_0(\cli, \cen) \le (1+\varepsilon/3) \cdot \opt_0$. Then, $\cost_r(\cli, \cen) \le (1+\varepsilon) \cdot \opt_r$. Such a set $\cen$ can be found in time $2^{(k/\varepsilon)^{\Oh(1)}} \cdot nd$.
\end{lemma}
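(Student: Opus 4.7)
The plan is to exploit the elementary pointwise comparison between $\d_r$ and the Euclidean distance $\d = \d_0$, namely $\d_r(x,y) \le \d(x,y) \le \d_r(x,y) + r$ for every pair of points. Summing over all $p \in \cli$ with respect to any fixed candidate center set $\cen$ yields the sandwich
\[
\cost_r(\cli, \cen) \ \le\ \cost_0(\cli, \cen)\ \le\ \cost_r(\cli, \cen) + nr.
\]
Applying this sandwich to an optimal \kmed solution $\cen^\star_0$ gives $\opt_r \le \cost_r(\cli, \cen^\star_0) \le \cost_0(\cli, \cen^\star_0) = \opt_0$, and applying it to an optimal \probname solution $\cen^\star_r$ gives $\opt_0 \le \cost_0(\cli, \cen^\star_r) \le \opt_r + nr$. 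Thus $\opt_r \le \opt_0 \le \opt_r + nr$.

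Now I would plug in the hypothesis $\opt_r \ge 2nr/\varepsilon$, which is equivalent to $nr \le (\varepsilon/2)\,\opt_r$. Substituting into $\opt_0 \le \opt_r + nr$ gives $\opt_0 \le (1+\varepsilon/2)\,\opt_r$, which combined with $\opt_r \le \opt_0$ proves the first sentence of the lemma.

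For the second sentence, assume $\cen$ satisfies $\cost_0(\cli,\cen) \le (1+\varepsilon/3)\,\opt_0$. Chaining the sandwich inequality with the first sentence yields
\[
\cost_r(\cli,\cen) \ \le\ \cost_0(\cli,\cen)\ \le\ (1+\varepsilon/3)\,\opt_0\ \le\ (1+\varepsilon/3)(1+\varepsilon/2)\,\opt_r.
\]
A direct expansion gives $(1+\varepsilon/3)(1+\varepsilon/2) = 1 + \tfrac{5\varepsilon}{6} + \tfrac{\varepsilon^2}{6}$, which for $\varepsilon \in (0,1)$ is strictly less than $1+\varepsilon$ because $\varepsilon^2/6 < \varepsilon/6$. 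Hence $\cost_r(\cli,\cen) \le (1+\varepsilon)\,\opt_r$.

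For the algorithmic claim, the task of producing $\cen$ with $\cost_0(\cli,\cen) \le (1+\varepsilon/3)\,\opt_0$ is exactly a $(1+\varepsilon/3)$-approximation to \kmed in $\real^d$. I would invoke the FPT-AS of Kumar--Sabharwal--Sen~\cite{KumarSS05,KumarSS10} (or the refinement of Jaiswal--Kumar--Sen~\cite{JKS2014}), which with constant success probability produces such a set in time $2^{(k/\varepsilon)^{\Oh(1)}} \cdot nd$. The main (minor) obstacle is purely bookkeeping: ensuring that the $\varepsilon/3$ and $\varepsilon/2$ slacks compose to at most $1+\varepsilon$ in the range $\varepsilon \in (0,1)$, which the expansion above confirms.
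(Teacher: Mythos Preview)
Your proof is correct and follows essentially the same approach as the paper: both use the pointwise sandwich $\d_r \le \d_0 \le \d_r + r$, sum it to compare $\cost_0$ and $\cost_r$, plug in the hypothesis $nr \le (\varepsilon/2)\opt_r$, chain the resulting bound with the $(1+\varepsilon/3)$-approximate \kmed guarantee, and invoke \cite{KumarSS05,KumarSS10} for the algorithmic claim. If anything, your write-up is slightly more explicit than the paper's (you spell out $\opt_r \le \opt_0$ and the step $\cost_r \le \cost_0$, and you verify the arithmetic $(1+\varepsilon/3)(1+\varepsilon/2) \le 1+\varepsilon$), but the argument is the same.
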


\begin{proof}
	For any $\cen' \subset \real^d$, it holds that $\cost_0(\cli, \cen) \le \cost_r(\cli, \cen') + nr$. In particular, let $\cen^* \subset \real^d$ be an optimal solution of size $k$, i.e., $\cost_r(\cli, \cen^*) = \opt_r$. Then, it holds that $\cost_0(\cli, \cen^*) \le \cost_r(\cli, \cen^*) + n \cdot \frac{\varepsilon \opt_r}{2n} = (1+\varepsilon/2) \cdot \opt_r$. Since $\cost_0(\cli, \cen^*) \ge \opt_0$, since $\cen^*$ is a feasible solution of size $k$ for $r = 0$. This shows the first inequality.
	
	Now, let $\cen \subset \real^d$ be a $(1+\varepsilon/3)$-approximate solution of size $k$ for $\cost_0$, as in the statement of the lemma, i.e., $\cost_0(\cli, \cen) \le (1+\varepsilon/3) \cdot \opt_0$. Then, the first inequality implies that $\cost_0(\cli, \cen) \le (1+\varepsilon/3) \cdot (1+\varepsilon/2) \cdot \opt_r \le (1+\varepsilon) \cdot \opt_r$. We can use a $(1+\varepsilon/3)$-approximation algorithm (e.g., \cite{KumarSS05,KumarSS10}) for \kmed to find such a solution.
\end{proof}

For a given input $P$, we try the procedures from \Cref{lem:kcenterlemma} and \ref{lem:kmedlemma} and keep them as candidate solutions. However, if $P$ does not satisfy the conditions required to apply these lemmas, then we must have that $\frac{\varepsilon \opt_r}{2n} \le r \le \opt_r$. In this case, we use the next step before proceeding to the main algorithm.

\subparagraph{Step 2.} Bounding the aspect ratio.

In this step, we suitably discretize the distances in order to bound the aspect ratio of the metric
(i.e., the maximum ratio of inter-point distances) by $\Oh(\frac{n^2}{\varepsilon})$. This procedure preserves the cost of an optimal solution up to a factor of $1+\varepsilon$.

\begin{lemma}\label{lem:aspectratio}
	Let $\cli$ be a set of points satisfying $\frac{\varepsilon \opt_r}{2n} \le r \le \opt_r$. Then, in polynomial time we can obtain another (multi)set of points $\cli'$ such that, for any solution $\cen \subset \real^d$, $\cost_r(\cli', \cen) \in (1\pm \varepsilon) \cdot \cost_r(\cli, \cen)$, and $\displaystyle \frac{\max_{p, q \in \cli'} \d(p, q)}{\min_{p, q \in P': \d(p, q) \neq 0} \d(p, q)} \le \frac{4n^2}{\varepsilon}$.
\end{lemma}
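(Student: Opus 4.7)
The plan is to construct $\cli'$ from $\cli$ by two successive discretizations: first a \emph{snap-to-grid} step that lower-bounds the minimum nonzero pairwise distance, and then a \emph{diameter-reduction} step that upper-bounds the maximum pairwise distance, both preserving $\cost_r(\cdot,\cen)$ up to a $(1\pm\varepsilon)$ factor.

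\emph{Step 1 (snapping).} I would overlay $\real^d$ with an axis-aligned grid of side length $\sigma:=\frac{\varepsilon r}{2n\sqrt{d}}$ and replace each $p\in\cli$ by the closest vertex of this grid, preserving multiplicities; call the result $\cli_1$. Each point moves by at most $\sigma\sqrt{d}/2=\varepsilon r/(4n)$. Since $\d_r(\cdot,\cen)$ is $1$-Lipschitz (as the composition of the $1$-Lipschitz $\d(\cdot,\cen)$ with the $1$-Lipschitz clamp $x\mapsto\max\{x-r,0\}$), for every $\cen\subset\real^d$,
\[
|\cost_r(\cli,\cen)-\cost_r(\cli_1,\cen)|\,\le\,n\cdot\frac{\varepsilon r}{4n}\,=\,\frac{\varepsilon r}{4}\,\le\,\frac{\varepsilon}{4}\opt_r,
\]
using $r\le\opt_r$. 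Distinct grid vertices are at Euclidean distance at least $\sigma$, which gives the required lower bound $\Omega(\varepsilon r/n)$ on the minimum nonzero distance in $\cli_1$.

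\emph{Step 2 (bounding the diameter).} I would compute in polynomial time a constant-factor (bicriteria) approximate solution $\cen_0\subset\real^d$ with $\cost_r(\cli_1,\cen_0)\le\rho\opt_r$ for an absolute constant $\rho$. Because no single point can contribute more than $\rho\opt_r$, every $p$ satisfies $\d(p,\cen_0)\le\rho\opt_r+r=:D$, and $D=\Oh(nr/\varepsilon)$ by the hypothesis $\opt_r\le 2nr/\varepsilon$. Partition $\cli_1$ by nearest facility into clusters $\{C_c\}_{c\in\cen_0}$, each of diameter at most $2D$. I would then pick canonical target positions $\zeta_1,\ldots,\zeta_{k'}$ on a fixed line in $\real^d$, spaced by exactly $4D$, and translate each cluster $C_{c_i}$ by the vector $\zeta_i-c_i$; the resulting multiset $\cli'$ is contained in a slab of length at most $4Dk+2D=\Oh(n^2 r/\varepsilon)$. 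Combined with Step~1, the aspect ratio becomes $\Oh(n^2/\varepsilon)$, and by optimizing the constants in $\sigma$ and $D$ this matches the claimed bound $4n^2/\varepsilon$.

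\emph{Main obstacle.} The principal technical challenge is the cost-preservation analysis for Step~2: showing that the cluster-wise translation does not distort $\cost_r(\cli_1,\cen)$ for an \emph{arbitrary} $\cen$, not just a near-optimal one. My plan is to argue, via the triangle inequality and the spacing $4D>2D$ between $\zeta_i$'s, that any $\cen$ assigning a point of $C_{c_i}$ to a facility nearer to some $\zeta_j$ with $j\neq i$ must pay at least $\Omega(D)\gg \opt_r$ in either $\cli'$ or $\cli_1$, so the optimal assignment structure coincides in the two instances. Under this ``cluster-respecting'' structure, the cost decomposes additively over $\cen_0$-clusters, and the simultaneous translation of $C_{c_i}$ together with the nearby facilities of $\cen$ leaves each summand invariant. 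The small additive slack from Step~1 then folds into the multiplicative $(1\pm\varepsilon)$ bound using $\cost_r(\cli,\cen)\ge\opt_r$ for any solution that is not itself trivially cheap.
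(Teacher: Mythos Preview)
Your Step~1 (snapping to a grid) is essentially the same argument the paper uses; the paper snaps at scale $\tfrac{\varepsilon\,\OPT_r}{n\sqrt d}$ rather than $\tfrac{\varepsilon r}{2n\sqrt d}$, but this is immaterial.

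Your Step~2, however, is a genuinely different route from the paper, and it has a real gap. The paper does \emph{not} translate clusters. Instead it forms an auxiliary graph on $\cli$ with an edge $pq$ whenever $\d(p,q)\le 2(\OPT_r+r)\le 4\OPT_r$, observes that no optimal cluster can contain points from two different connected components (each point is within $r+\OPT_r\le 2\OPT_r$ of its optimal center), and therefore handles each component as an independent sub-instance, combining the answers by a trivial dynamic program over how many of the $k$ centers go to each component. Within a single component the diameter is already at most $4n\,\OPT_r$, so snapping alone gives the aspect-ratio bound $\tfrac{4n\,\OPT_r}{\varepsilon\,\OPT_r/n}=\tfrac{4n^2}{\varepsilon}$; no geometric rearrangement is needed.

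The gap in your argument is the sentence ``any $\cen$ assigning a point of $C_{c_i}$ to a facility nearer to some $\zeta_j$ with $j\ne i$ must pay at least $\Omega(D)$ in either $\cli'$ or $\cli_1$.'' This is false for $\cli_1$. Nothing prevents two centers $c_i,c_j$ of your approximate solution $\cen_0$ from being very close in the original instance (a constant-factor approximation may well place two nearly coincident centers). In that situation there are size-$k$ solutions $\cen$ that merge $C_{c_i}$ and $C_{c_j}$ in $\cli_1$ at cost comparable to $\OPT_r$, yet after you translate the two clusters to $\zeta_i,\zeta_j$ at distance $4D=\Theta(nr/\varepsilon)$, any $\cen$ (whether the original one or its translate) must pay $\Omega(D\cdot |C_{c_j}|)$ for one of the clusters in $\cli'$. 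That is not a $(1\pm\varepsilon)$ distortion. More basically, once you rigidly translate different clusters by different vectors, the phrase ``for the same $\cen$'' in the lemma loses meaning: you are implicitly comparing $\cost_r(\cli_1,\cen)$ with $\cost_r(\cli',\Phi(\cen))$ for some solution map $\Phi$, which is a different (and weaker) statement than the one you set out to prove. The paper's connectivity decomposition sidesteps this entirely, because the pieces it separates were \emph{already} pairwise far apart in $\cli$, so no optimal (or near-optimal) solution ever wants a center to serve two of them.
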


\begin{proof}
	Imagine an auxiliary graph $G = (\cli, E)$, where $pq \in E$ iff $\d(p, q) \le 2(\OPT_r+r) \le 4\OPT_r$. If $G$ has more than one connected component, then, note that two points belonging to different connected components cannot belong to the same optimal cluster. Hence, we can solve the problem separately on the points belonging to different connected components and combine the solutions by a simple dynamic programming.
	
	Thus, we can handle each connected component of $G$ separately. In any connected component, the maximum distance between any two points is at most $n \cdot 4\OPT_r$. Now, we place a grid of sidelength $\frac{\varepsilon \OPT_r}{\sqrt{d} n}$ and move each point $p \in \cli$ to the center of the grid. It can be easily shown that this process does not change the cost of an optimal solution by more than a $(1+\varepsilon)$ factor. Thus, the smallest non-zero inter-point distance is now at least $\frac{\varepsilon \OPT_r}{n}$. Thus, the aspect ratio of the instance is bounded by $\frac{4n^2}{\varepsilon}$. 
\end{proof}

Bounding the aspect ratio by $\Oh(\frac{n^2}{\varepsilon})$ means that an exponential search over distances has at most $\log_{2}\lr{\frac{n^2}{\varepsilon}} = \Oh\lr{\frac{\log(n)}{\varepsilon}}$ levels, which will be useful in our main algorithm. By slightly abusing the notation, we continue to use $\cli$ for referring to the discretized (multi)set $\cli'$ returned by \Cref{lem:aspectratio}. If there are any co-located points in $\cli$, we will treat them as separate points, and hence use set terminology instead of multiset terminology.

After the two preprocessing steps, we now proceed to the description of the main algorithm.

\subsection{Main Algorithm}

Our goal is to prove \Cref{thm:fptas}, that is, to  design a randomized bicriteria FPT approximation for \probname.  
We define some parameters. Let $\delta \coloneqq \tfrac{\varepsilon}{10k} < \tfrac{1}{2}$, $\delta' \coloneqq \tfrac{\delta}{3}$ and $r' \coloneqq (1+\delta')r$. 

\begin{algorithm}[h]
	\caption{\texttt{HybridClustering}$(\cen', k, m)$} \label{algo:recursive}
	\begin{algorithmic}[1]
		\Statex $\cen' \subseteq \real^d$ is a subset of centers of size at most $k - m$ added to the solution so far 
		\Statex $\beta = \frac{1}{\delta^{c'}}$ as required in \Cref{prop:samplemedian} and $\beta' \coloneqq \beta \cdot \frac{150k}{\delta^3}$.
		\If{$m = 0$} 
		\State \Return $\cen'$ \label{lin:basecase}
		\EndIf
		\State $R \gets \bigcup_{c' \in \cen'} \grid(c', 16r, \delta r)$ \label{lin:centergrid}
		\For{each $q$ of the form $2^j$ in the range $[8r, \dmax]$} \label{lin:forloop}
		\State $\cli_q \coloneqq \cli \setminus \lr{\bigcup_{c' \in \cen'} B(c', q)}$ \label{lin:faraway}
		\State Let $S_q$ be a sample of size $\beta'$ chosen uniformly at random from $\cli_q$  \label{lin:sq-sample}
		\State $R \gets R \cup \bigcup_{p \in S_q} \grid(p, \frac{8r}{\delta}, \delta r)$ \label{lin:samplegrid}
		\For{each $S \subseteq S_q$ of size $\beta$} \label{lin:innerforloop}
		\State $c' \gets \texttt{ApproxSolutionOnSample}(S, \delta/8)$ \label{lin:approxsample} \Comment{Algorithm from \Cref{prop:samplemedian}}
		\State $R \gets R \cup \LR{c'}$ 
		\EndFor \label{lin:endinnerfor}
		\EndFor \label{lin:endouterfor}
		\For{each $c \in R \setminus \cen'$}
		\State  Call \texttt{HybridClustering}$(\cen' \cup \LR{c}, k, m-1)$ \label{lin:recursivecall}
		\EndFor
		\State Call \texttt{HybridClustering}$(\cen', k, m-1)$ \label{lin:emptycall}
		\State \Return solution $\tilde{\cen}$ minimizing $\cost_{r'}(\cli, \tilde{\cen})$ over recursive calls made in lines \ref{lin:recursivecall} and \ref{lin:emptycall} \label{lin:finalreturn}
	\end{algorithmic}
\end{algorithm}

\Cref{algo:recursive} is a recursive algorithm, and is called {\tt HybridClustering}. It takes three parameters $\cen', k$, and $m$. $\cen' \subset \real^d$ is a subset of centers added to the solution so far and has size $k-m$. Further, $k$ is the total size of the solution, and $m$ is an upper bound on the remaining solution (since we have already added $k-m$ centers). At a high (and imprecise) level, the goal of each recursive step is to find an \emph{approximate replacement} for each center in an unknown optimal solution.

In line~\ref{lin:basecase}, we check whether $m = 0$, i.e., whether we have used our budget of $k$ centers, and if so, we return the same set $F'$ of centers built through the recursive process. Otherwise (line \ref{lin:centergrid} onward), we assume that $m > 0$, i.e., we are yet to add a set of centers. Throughout this process (line \ref{lin:centergrid} to \ref{lin:endouterfor}, we will build a set $R$ consisting of candidate centers, at least one of which will be an approximate replacement of an \emph{unseen} center (i.e., one whose approximate replacement has not already been found) from an optimal solution. Finally, in line \ref{lin:recursivecall}, we will make a recursive call by adding each candidate to the current solution $F'$. Now we discuss how we build the set $R$. 

First, in line \ref{lin:centergrid}, for each center $c' \in F'$ added so far, we add a set of ``nearby'' centers by placing a grid. This handles the case when an unseen optimal center is close to one of the already chosen centers in $F'$. Next, in the outer \textbf{for} loop (line \ref{lin:forloop} to \ref{lin:endouterfor}), we handle the case when all new optimal centers are relatively far from the already chosen centers. In this \textbf{for} loop, we iterate over a range of values for the parameter $q$ via exponential search. Parameter $q$ tries to approximate half of the minimum distance between the already chosen and new optimal centers. Thus, for the ``correct'' value of $q$, the set of points $C_q$ lying ``far'' from the centers of $F'$ (line \ref{lin:faraway}), leaves all of the $m$ unseen optimal clusters untouched. At this point, we aim to use a sample of faraway size (chosen in line \ref{lin:sq-sample}), to find an approximate replacement for one of these $m$ unseen centers. We do this by using the sample in two different ways, to handle two different situations. First, if our sample happens to contain a point ``nearby'' an unseen center, say $c^\star$, then the points chosen from the fine grid in line \ref{lin:samplegrid} will find such an approximate replacement for $c^\star$. Otherwise, the idea is that, if we have removed a significant fraction of points from the ``seen'' clusters in line \ref{lin:faraway}, by virtue of being close to $F'$, then the sample contains sufficiently many (i.e., at least $\beta$) points from the largest unseen cluster, say $C^\star$, with reasonable probability, and these points can be used to find an approximate replacement of the cluster center (using \Cref{prop:samplemedian}). However, \emph{a priori} we do not know which subset of the sample comes from $C^\star$. Therefore, we iterate over all subsets of size $\beta$ in the inner \textbf{for} loop (lines \ref{lin:innerforloop} to \ref{lin:endinnerfor}) to find such a subset of size $\beta$ that comes entirely from $C^\star$ and use a known subroutine, called $\texttt{ApproxSolutionOnSample}$, to find an approximate replacement. Finally, in \ref{lin:recursivecall}, we make a recursive call by adding each center from $R \setminus F'$, and in line \ref{lin:emptycall}, we make a recursive call by not adding any new center (to handle a particular case). In line \ref{lin:finalreturn}, we return the minimum-cost solution found over all recursive calls. This completes the description of the algorithm.

\subsection{Analysis}
The crux of the analysis is to establish that \Cref{algo:recursive} satisfies the following invariant.

\paragraph*{Invariant} Let $0 \le m \le k$ and $0 < \alpha < 1$ be a constant. Suppose for the given $\cen'$ of size at most $k-m$, there exists some $\cen = \cen' \uplus \cen_{o} \subset \real^d$, such that 
\begin{enumerate}
	\item $|\cen_o| \le m$, and
	\item 
	\begin{equation}
		\sum_{c \in \cen'} 	\cost_{r'}(\cluster(c, \cen), c) + \sum_{c \in \cen_o}  \cost_r(\cluster(c, \cen), c) \le (1+ \delta)^{k-m}\cdot \OPT_r \label{eqn:invariant} 
	\end{equation}
\end{enumerate}
Then, with probability at least $\alpha^{m}$, the algorithm returns a solution $\tilde{\cen} \subset \real^d$, such that
\begin{enumerate}
	\item $|\tilde{\cen}| \le k$,
	\item $\cen' \subseteq \tilde{\cen}$, and
	\item \begin{equation}
		\sum_{c \in \tilde{\cen}}	\cost_{r'}(\cluster(c, \cen), c) \le (1+\delta)^{k} \cdot \OPT_r. \label{eqn:invariant2}
	\end{equation}
\end{enumerate} 

\paragraph*{Proof of Correctness.} The proof is by induction on $m$. For the base case, consider $m = 0$. In the base case (\Cref{lin:basecase}), we return the same $\cen' = \cen$ with probability one. In this case, the invariant tells us that $\cost_{r'}(\cli, \cen) \le (1+\delta)^k \cdot \opt_r$, which is what we need to prove. Now we assume that the claim is true for some $m-1 \ge 0$ and we prove it for $m$ by considering different cases.

\subparagraph{Easy case: $\cen = \cen'$.} This is a much simpler case since we have already found the desired set. In this case, any solution $\tilde{\cen}$ returned by a recursive call always contains $\cen = \cen'$ as a subset. Then, in this case, we have that:
 	\begin{align*}
		\cost_{r'}(\cli, \tilde{\cen}) \le \cost_{r'}(\cli, \cen) \le (1+ \delta)^{k-m}\cdot \OPT_r \le (1+\delta)^k \cdot \OPT_r
	\end{align*}
	Here, the first inequality follows from the assumption that $\tilde{\cen} \supseteq \cen = \cen'$, and the second inequality follows from  (\ref{eqn:invariant}) of the invariant. Note that we do not need to rely on the induction here.

\subparagraph{Main case: $\cen' \subsetneq \cen$.} This is the case where we are yet to discover some subset (namely, $\cen \setminus \cen'$) of centers. We will analyze this case by considering different scenarios based on the inter-center distances, as well as their relative sizes. 
	
	First, since $\cen' \subsetneq \cen$, there exists some $c \in \cen \setminus \cen'$. Now, let $c \in \cen \setminus \cen'$ and $c' \in \cen'$ be the pair of centers with the smallest distance, i.e., $(c, c')$ is a pair realizing $\min_{c_1 \in \cen \setminus \cen', c_2 \in \cen'} \d(c_1, c_2)$. Now we consider different cases depending on $\d(c, c')$, namely the closest distance between an already chosen center $c' \in \cen'$, and an ``unseen center'' $c \in \cen\setminus \cen'$.

\subparagraph{Case 1. Nearby center: $\bm{\d(c, c') \le 16r$}.} In this case, via \Cref{obs:grid}, we conclude that there exists some $\tilde{c} \in \grid(c', 16r, \delta r)$ with $\d(\tilde{c}, c) \le \delta r$. Let $\tilde{\cen} = \cen' \cup \LR{\tilde{c}}$. Then, the proof follows from the following claim (see \Cref{fig:nearbygrid}). 

\begin{figure}[t]
	\centering
	\includegraphics[scale=0.6,page=2]{introclusters.pdf}
	\caption{\small Illustration for Case 1. Centers in $\cen'$ are shown as red squares and unseen centers of $\cen \setminus \cen'$ are shown as purple crosses. {\color{Plum}$c$} is the closest center to \red{$F'$} and $\d({\color{Plum}c},\red{c'}) \le 16r$. Then, a nearby center {\color{orange}$\tilde{c}'$} can be found using a $\delta r$ grid.} \label{fig:nearbygrid}
\end{figure}

\begin{claim}\label{cl:nearbycenter}
	Let $c_1 \in \cen_o$ and let $\tilde{c_1} \in \real^d$ be such that $\d(c_1, \tilde{c_1}) \le \delta' r$. Then, with probability at least $\alpha^{m-1}$, {\tt HybridClustering}$(\cen \cup \tilde{c_1}, k, m-1)$ returns a solution $\tilde{\cen}$ that satisfies the required properties.
\end{claim}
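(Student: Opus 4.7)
The plan is to prove \Cref{cl:nearbycenter} by descending to the recursive call $\texttt{HybridClustering}(\cen' \cup \{\tilde{c_1}\}, k, m-1)$ (which the algorithm does invoke, since in Case~1 we have $\tilde{c_1} \in \grid(c', 16r, \delta r) \subseteq R$ by construction in \Cref{lin:centergrid}) and applying the induction hypothesis at parameter $m-1$. To set this up, I need to exhibit a hypothetical completed solution $\cen^*$ together with a partition $\cen^* = \cen'' \uplus \cen_o^{\prime\prime}$ satisfying the premise of the invariant. The natural choice is $\cen'' \coloneqq \cen' \cup \{\tilde{c_1}\}$ and $\cen_o^{\prime\prime} \coloneqq \cen_o \setminus \{c_1\}$, so that $\cen^* = (\cen \setminus \{c_1\}) \cup \{\tilde{c_1}\}$. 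The size conditions $|\cen''| \le k - (m-1)$ and $|\cen_o^{\prime\prime}| \le m-1$ are immediate.

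The substantive step is to verify the cost inequality \eqref{eqn:invariant} at depth $m-1$ with $\cen^*$ in place of $\cen$. The crucial ingredient is the \emph{radial slack estimate}: for any $p \in \cli$, the triangle inequality together with $\d(c_1, \tilde{c_1}) \le \delta' r$ yields
\begin{align*}
\d_{r'}(p, \tilde{c_1}) = \max\{\d(p, \tilde{c_1}) - (1+\delta')r,\, 0\} & \le \max\{\d(p, c_1) + \delta' r - (1+\delta')r,\, 0\} \\
& = \max\{\d(p, c_1) - r,\, 0\} = \d_r(p, c_1).
\end{align*}
In words, the $\delta' r$ of slack in the bicriteria radius $r' = (1+\delta')r$ is tuned precisely to absorb the $\delta' r$ positional drift from $c_1$ to its stand-in $\tilde{c_1}$. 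Combining this pointwise bound with the routine fact that the closest-center clustering has cost at most any fixed assignment of points to the same center set, and then reassigning $\cluster(c_1, \cen)$ to $\tilde{c_1}$ while keeping the remainder of $\cluster(\cdot, \cen)$ unchanged, the new invariant's LHS is upper bounded by the old invariant's LHS for $(\cen', \cen)$, which is at most $(1+\delta)^{k-m} \cdot \OPT_r \le (1+\delta)^{k-(m-1)} \cdot \OPT_r$, as required.

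With the premise established, the induction hypothesis yields $\tilde{\cen}$ with $|\tilde{\cen}| \le k$, $\cen' \subseteq \cen'' \subseteq \tilde{\cen}$, and $\cost_{r'}(\cli, \tilde{\cen}) \le (1+\delta)^k \cdot \OPT_r$ with probability at least $\alpha^{m-1}$, which is exactly the conclusion of the claim. I expect the main obstacle to be the radial-slack estimate itself: although it is a one-line calculation, it is the sole place where the bicriteria relaxation of the radius is \emph{spent}, and the reason the algorithm can afford to snap an unseen optimal center to a nearby grid point without violating the cumulative $(1+\delta)^k$ cost bound. A secondary subtlety is that the closest-center clustering with respect to $\cen^*$ differs from that with respect to $\cen$ for points lying on the boundary between $\cluster(c_1, \cen)$ and a neighboring cluster; the reassignment inequality is invoked precisely to sidestep this by evaluating the LHS on a convenient, possibly suboptimal partition. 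An analogous radial-slack estimate will reappear in the remaining (far-center) cases, with $\tilde{c_1}$ obtained from the sampling subroutine of \Cref{prop:samplemedian} rather than from a grid, with a correspondingly smaller but still constant success probability.
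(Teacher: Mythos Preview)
Your proof is correct and follows essentially the same approach as the paper: both define $\cen_{\new} = \cen' \cup \{\tilde{c_1}\} \uplus (\cen_o \setminus \{c_1\})$, establish the pointwise radial-slack bound $\d_{r'}(p,\tilde{c_1}) \le \d_r(p,c_1)$ from $\d(c_1,\tilde{c_1}) \le \delta' r$ and $r' = (1+\delta')r$, use a reassignment argument (keeping all clusters except $\cluster(c_1,\cen)$, which is rerouted to $\tilde{c_1}$) to verify the invariant at depth $m-1$, and then invoke the inductive hypothesis. Your discussion of the reassignment subtlety (that $\cluster(\cdot,\cen_{\new})$ need not coincide with the $\sigma$-partition) is in fact more explicit than the paper's treatment.
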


\begin{proof}
	Consider $c_1, \tilde{c_1}$ as defined in the statement.
	Let $A = \cluster(c_1, \cen)$. For any point $p \in A$, $\d(p, \tilde{c_1}) \le \d(p, c_1) + \d(c_1, \tilde{c_1}) \le \d(p, c_1) + \delta' r$.
	This implies that, $\d_{r'}(p, \tilde{c_1}) \le \d_{r'}(p, c_1)$.	Define $\cen_{\new} \coloneqq \cen'_{\new} \uplus \cen'_o$, where $\cen'_{\new} \coloneqq \cen' \cup \LR{\tilde{c_1}}$ and $\cen'_o \coloneqq \cen_o \setminus \LR{c_1}$. 
	First, we show the following inequality.
	\begin{align}
		\sum_{c \in \cen'_{\new}} \cost_{r'}(\cluster(c, \cen), c) + &\sum_{c \in \cen'_o} \cost_{r}(\cluster(c, \cen_{\new}), c)\nonumber\\
		 \le& \sum_{c \in \cen'}  \cost_{r'}(\cluster(c, \cen),c) + \sum_{c \in \cen_o} \cost_{r}(\cluster(c, \cen), c) \label{eq:claimbound}
	\end{align}
	We construct an assignment of clients to the centers in $\cen_{\new}$, where we may not assign a client to its closest center. To construct this assignment, we consider different cases. For $c \in \cen' \cup \cen_o \setminus \LR{c_1}$, we assign all points $p \in \cluster(c, \cen)$ to $c$. The contribution of all such points is the same as the right-hand side of (\ref{eq:claimbound}). Finally, we assign all points in $\cluster(c_1, \cen)$ to $\tilde{c_1}$. By the choice of $\tilde{c}_1$, $\cost_{r'}(\cluster(c_1, \cen), \tilde{c_1}) \le \cost_{r}(\cluster(c_1, \cen), c_1)$, which is the contribution of such points on the right-hand side. Since the cost on the left-hand side is no larger than the cost of the assignment thus constructed, it shows (\ref{eq:claimbound}). 
	
	Note that the right-hand side of (\ref{eq:claimbound}) is at most $(1+\delta)^{k-m} \cdot \opt_r$ due to the invariant, and hence $\cen_{\new}$ satisfies the properties required to apply the inductive hypothesis for $m-1$. This implies that with probability at least $\alpha^{m-1}$ the recursive call \texttt{Recursive}$(\cen'\cup \LR{\tilde{c_1}}, k, m-1)$ returns a solution $\tilde{\cen}$ satisfying $\cost_{r'}(\cli, \tilde{\cen}) \le (1+\delta)^k \cdot \OPT_r$.
\end{proof}

\subparagraph{Case 2. Faraway center: $\bm{16r < \d(c, c') \le \d_{\max}}$}. Let $t = \d(c, c')$ and $q^\star$ be the largest power of $2$ that is at most $t/2$.  Consider $\cli_{q^\star} = \cli \setminus \lr{ \bigcup_{c_1 \in \cen'} B(c_1, q^\star)}$. Let $c^\star \in \cen \setminus \cen'$ denote the center of the maximum-size cluster, i.e., $c^\star = \arg\max_{c_1 \in \cen \setminus \cen'} |\cluster(c_1, \cen)|$, and $L \coloneqq \cluster(c^\star, \cen)$ denote the largest cluster. Finally, let $D \coloneqq \bigcup_{c_{\text{old}} \in \cen'} \cluster(c_{\text{old}}, \cen) \cap \cli_{q^\star}$ denote the set of clients that are distant from the respective centers in $\cen'$. Let us summarize some consequences of these definitions in the following observation (its proof is essentially discussed above). Also see \Cref{fig:farcluster}.

\begin{figure}[t]
	\centering
	\includegraphics[scale=0.5,page=3]{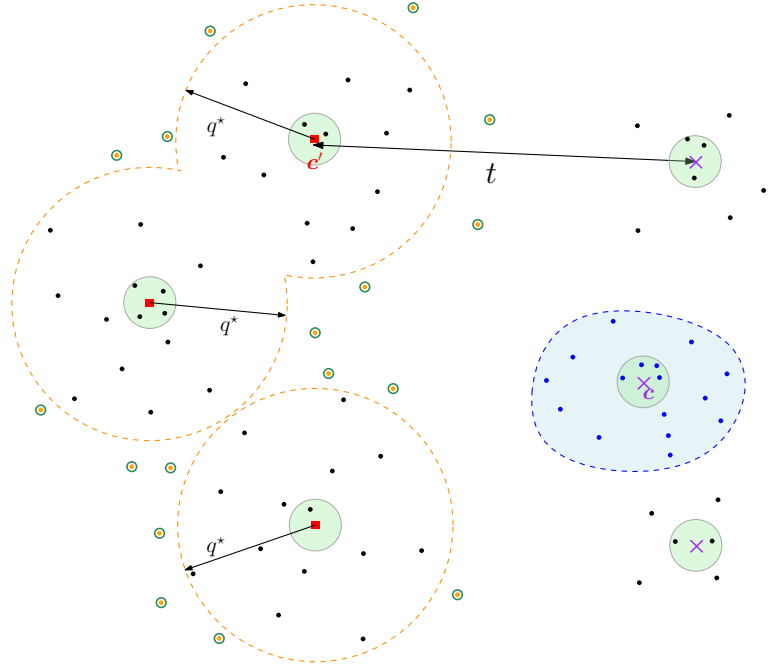}
	\caption{\small Illustration for Case 2. Centers of $\red{\cen'}$ are shown as red squares and unseen centers of ${\color{Plum}\cen \setminus \cen'}$ are shown as purple crosses. Balls of radius $q^\star$ around $\cen'$ are shown in dashed orange. $P'$ are the points lying outside these balls. Among the points of $P'$, $D$ is the set of points belonging to clusters around $F'$, and shown as green-orange filled dots. Finally, the cluster around ${\color{Plum} c}$ is the largest unseen cluster (marked in dashed blue shape), $L$. We analyze different cases depending on the relative sizes of $L$ and $D$.} \label{fig:farcluster}
\end{figure}

\begin{observation} \label{obs:definitions}~
	\begin{enumerate}
		\item $\displaystyle \cli_{q_\star} = D \uplus \biguplus_{c_1 \in \cen \setminus \cen'}\cluster(c_1, \cen) \cap \cli_{q^\star}$. 
		\item In particular, $\cluster(c, \cen), \cluster(c^\star, \cen) \subseteq \cli_{q^\star}$.
	\end{enumerate}
\end{observation}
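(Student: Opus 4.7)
Both items in the observation are structural and follow essentially from the definitions, so my plan is to verify each in turn without heavy machinery. For item (1), I would start from the fact that $\LR{\cluster(c_1, \cen) : c_1 \in \cen}$ is, by definition (with the arbitrary tie-breaking convention), a partition of $\cli$. Intersecting each cell with $\cli_{q^\star}$ still gives a partition of $\cli_{q^\star}$, and splitting the index set as $\cen = \cen' \uplus (\cen \setminus \cen')$ yields two pieces. Recognizing the first piece, $\bigcup_{c_1 \in \cen'} \cluster(c_1, \cen) \cap \cli_{q^\star}$, as precisely the definition of $D$ then gives the claimed decomposition; this is essentially a tautology.

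For item (2), it suffices to establish the general statement that if $c_2 \in \cen \setminus \cen'$ and $p \in \cluster(c_2, \cen)$, then $p \in \cli_{q^\star}$, i.e., $\d(p, c_1) > q^\star$ for every $c_1 \in \cen'$. The argument is the standard Voronoi-plus-triangle-inequality one: fix any $c_1 \in \cen'$; by the Voronoi property $\d(p, c_2) \le \d(p, c_1)$; by the triangle inequality $\d(c_2, c_1) \le \d(p, c_2) + \d(p, c_1) \le 2\,\d(p, c_1)$; and by the choice of $(c, c')$ as realizing the minimum distance between $\cen \setminus \cen'$ and $\cen'$, $\d(c_2, c_1) \ge \d(c, c') = t$. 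Combining, $\d(p, c_1) \ge t/2 \ge q^\star$. Specializing to $c_2 \in \LR{c, c^\star}$ yields both inclusions asserted in (2).

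The only mild subtlety, which I expect to be the lone (and purely cosmetic) obstacle, is that $\cli_{q^\star}$ is defined via a \emph{closed} ball, so one technically needs the strict inequality $\d(p, c_1) > q^\star$. Whenever $t/2$ is not a power of two, $q^\star < t/2$ strictly and the chain above already yields strict inequality; in the borderline case $q^\star = t/2$, equality throughout would force $p$ to lie on the segment joining $c_2$ and $c_1$ with $\d(p, c_2) = \d(p, c_1)$, a tie that can be absorbed by fixing the tie-breaking rule for $\cluster$ to favor $\cen'$, thereby placing any such $p$ into $D$ rather than into $\cluster(c_2, \cen)$. I expect the main text to pass over this corner case in silence, since it does not impact the downstream argument.
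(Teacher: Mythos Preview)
Your proposal is correct and matches the paper's treatment. The paper does not actually give a proof of this observation; it merely remarks that ``its proof is essentially discussed above,'' i.e., that both items are immediate from the definitions of $\cli_{q^\star}$, $D$, and the pair $(c,c')$. Your item~(1) argument is exactly the intended tautology, and your item~(2) argument (Voronoi property plus triangle inequality to get $\d(p,c_1)\ge t/2\ge q^\star$) spells out precisely the reasoning the paper leaves implicit. The closed-ball corner case you flag is real but, as you anticipated, the paper passes over it in silence.
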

We consider different sub-cases based on the relative sizes of $L$ and $D$.

\subparagraph{Case 2.1. New cluster is tiny: $\bm{|L| \le \delta^2/4 \cdot |D|}$.} Let $N \coloneqq \cluster(c, \cen)$. Note that the definition of $c^\star$, combined with the case assumption, implies that $|N| \le |L| \le \frac{\delta^2}{4} \cdot |D|$. We summarize a few technical consequences of these definitions in the following claim. 

\begin{claim}\label{cl:tinycluster}
	\begin{align}
		\cost_r(N, c') &\le \delta \cdot \cost_r(D, \cen) + (1+\delta) \cdot \cost_r(N, c) \label{eqn:costincrease}
	\end{align}
\end{claim}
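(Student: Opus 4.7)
The plan is to bound the cost of re-assigning each point of $N$ to the already-chosen center $c'$ (instead of its \emph{true} center $c$) by moving along the segment from $c$ to $c'$. The key ingredients are (i) a triangle-type inequality for $\d_r$, which gives an \emph{additive} slack proportional to $|N|\cdot t$, and (ii) exploiting the fact that the points in $D$, being far from $\cen'$, pay a substantial cost, which we can use to absorb the slack.

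\textbf{Step 1: triangle inequality for} $\d_r$. For any $p\in N$ and $c'\in\cen'$, the ordinary triangle inequality gives $\d(p,c')\le \d(p,c)+t$ with $t=\d(c,c')$. Since $\max(a+t,0)\le \max(a,0)+t$ for $t\ge 0$, this yields $\d_r(p,c')\le \d_r(p,c)+t$. Summing over $p\in N$,
\[
\cost_r(N,c')\;\le\;\cost_r(N,c)+|N|\cdot t.
\]

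\textbf{Step 2: turning $|N|\cdot t$ into cost paid by $D$.} By the case assumption and the choice of $c^\star$ as the largest cluster center in $\cen\setminus\cen'$, we have $|N|\le|L|\le (\delta^2/4)\cdot|D|$. For the distance, since $q^\star$ is the largest power of $2$ that is at most $t/2$, we have $t\le 4q^\star$. Moreover, the case hypothesis $t>16r$ implies $q^\star\ge t/4>4r$, hence $q^\star-r>\tfrac{3}{4}q^\star$. Now every $p\in D$ lies in $\cli_{q^\star}$, so $\d(p,\cen)=\d(p,\cen')>q^\star$ (the first equality holds because $p\in \cluster(c_{\text{old}},\cen)$ for some $c_{\text{old}}\in\cen'$), and thus $\d_r(p,\cen)\ge q^\star-r> \tfrac{3}{4}q^\star$. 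Summing,
\[
\cost_r(D,\cen)\;\ge\;\tfrac{3}{4}\,|D|\cdot q^\star,\qquad\text{so}\qquad |D|\cdot q^\star\;\le\;\tfrac{4}{3}\,\cost_r(D,\cen).
\]

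\textbf{Step 3: combine.} Multiplying the bounds for $|N|$ and $t$,
\[
|N|\cdot t\;\le\;\tfrac{\delta^2}{4}|D|\cdot 4q^\star\;=\;\delta^2\,|D|\,q^\star\;\le\;\tfrac{4\delta^2}{3}\,\cost_r(D,\cen)\;\le\;\delta\cdot\cost_r(D,\cen),
\]
where the last inequality uses $\delta<1/2$. Plugging this into Step~1 and using $\cost_r(N,c)\le (1+\delta)\cost_r(N,c)$ gives exactly~\eqref{eqn:costincrease}.

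There is no real obstacle here; the only delicate point is ensuring that the constants line up, and in particular that $q^\star>4r$ (guaranteed by the Case~2 hypothesis $t>16r$ and the definition of $q^\star$), which is what lets us replace $\d_r(p,\cen)$ by a constant fraction of $q^\star$ on the points of $D$. The quadratic gain $\delta^2$ from the smallness of $L$ (and hence $N$) compared to $D$ is what makes the slack $|N|\cdot t$ absorbable into $\delta\cdot\cost_r(D,\cen)$.
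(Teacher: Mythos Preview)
Your proof is correct and in fact cleaner than the paper's. The paper bounds $\cost_r(N,c')$ by first passing to the ordinary distance via $\d_r(p,c')\le \d(p,c')$, then splitting $N$ into $\Nnear=\{p:\d(p,c)\le 2r/\delta\}$ and $\Nfar=\{p:\d(p,c)>2r/\delta\}$: on $\Nfar$ one has $\d(p,c)\le(1+\delta)\d_r(p,c)$, while the contribution of $\Nnear$ is bounded using $|\Nnear|\le|N|\le(\delta^2/4)|D|$ together with $r<t/16$. Both pieces are then absorbed into $\delta\cdot\cost_r(D,\cen)$ via the lower bound $\cost_r(D,\cen)\ge |D|\cdot 3t/16$.

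You bypass the near/far split entirely by using the sharper relation $\d_r(p,c')\le \d_r(p,c)+t$ (which keeps the ``$-r$'' on both sides), so the only slack is the additive $|N|\cdot t$ term. Your lower bound $\cost_r(D,\cen)\ge \tfrac{3}{4}|D|q^\star$ with $t\le 4q^\star$ is essentially the same as the paper's $\cost_r(D,\cen)\ge |D|\cdot 3t/16$, and the constants line up the same way. The upshot is that your argument actually yields the slightly stronger bound $\cost_r(N,c')\le \cost_r(N,c)+\delta\cdot\cost_r(D,\cen)$, making the $(1+\delta)$ factor in \eqref{eqn:costincrease} superfluous; the paper incurs it only because of the detour through $\d(p,c)$ on $\Nfar$.
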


\begin{proof}
	Note that for each $p \in D$, $\d(p, \cen) \ge \frac{t}{4}$. Thus, $\d_r(p, c') \ge \frac{t}{4} - r \ge \frac{3t}{16}$, where the last inequality follows from the case assumption, namely $t > 16r$. Thus, each point $p \in D$ contributes at least $\frac{3t}{16}$ to $\cost_{r'}(\cli, \cen)$, and their total contribution to $\cost_r(\cli, \cen)$ is 
	\begin{equation}
		\cost_r(D, \cen) \coloneqq \sum_{p \in D} \d_r(p, c') \ge |D| \cdot \frac{3t}{16} \label{eqn:costd}
	\end{equation}
	
	Now we upper bound the cost of assigning points of $N$ to $c'$. To this end, we partition $N = \Nnear \uplus \Nfar$, where $\Nnear \coloneqq \LR{p \in A: \d(p, c) \le 2r/\delta}$ and $\Nfar \coloneqq \LR{p \in A: \d(p, c) > 2r/\delta}$. Note that, for each $p \in \Nfar$, $\d_r(p, c) = \d(p, c) - r \ge \d(p, c) - \tfrac{\delta}{2} \cdot \d(p, c)$, which implies that, for each $p \in \Nfar$, 
	\begin{equation}
		\d(p, c) \le \lr{\frac{1}{1-\tfrac{\delta}{2}}} \cdot \d_r(p, c) \le (1+\delta) \cdot \d_r(p, c) \label{eqn:nfar}
	\end{equation} 
	Now consider,
	\begin{align}
		\cost_r(N, c') &\le \sum_{p \in N} \d(p, c') \tag{Since $\d_r(\cdot, \cdot) \le \d(\cdot, \cdot)$}
		\\&\le \sum_{p \in N} \d(p, c) + \d(c, c') \tag{Triangle inequality}
		\\&= |N| \cdot \d(c, c') + \sum_{p \in \Nnear} \d(p, c) + \sum_{p \in \Nfar} \d(p, c)  \nonumber
		\\&\le \tfrac{\delta^2}{8} \cdot |D| \cdot t + \sum_{p \in \Nnear} \tfrac{4r}{\delta} + \sum_{p \in \Nfar} (1+\delta) \cdot \d_r(p, c) \tag{From case assumption and (\ref{eqn:nfar})}
		\\&= \tfrac{\delta^2}{4} \cdot |D| \cdot t + \tfrac{4r}{\delta} \cdot |\Nnear| + (1+\delta) \cdot \cost_r(N, c) \nonumber
		\\&\le \tfrac{\delta^2}{4} \cdot |D| \cdot t + \delta \cdot |D| \cdot \tfrac{t}{16} + (1+\delta) \cdot \cost_r(N, c) \tag{$|\Nnear| \le |N| \le \tfrac{\delta^2}{4}$ and $t > 16r$}
		\\&\le |D| \cdot \tfrac{3t}{16} \cdot \delta \cdot \lr{ \tfrac{4\delta}{3} + \tfrac{1}{3}} + (1+\delta)\cdot \cost_r(N,c) \nonumber
		\\&\le \delta \cdot \cost_r(D, \cen) + (1+\delta) \cdot \cost_r(N, c) 
	\end{align}
	Where the last inequality follows from (\ref{eqn:costd}) and $\delta < 1/2$. 
\end{proof}

Thus, consider the solution $\cen \setminus \LR{c}$. To upper bound $\cost_r(\cli, \cen \setminus \LR{c})$, we assign all points in $N = \cluster(c, \cen)$ to $c'$. The cost of this solution can be upper bounded as follows
\begin{align}
	\cost_r(\cli, \cen \setminus \LR{c}) &\le \cost_r(\cli, \cen) - \cost_r(N, c) + \cost_r(N, c') \nonumber
	\\&\le \cost_r(\cli, \cen) + \delta \cdot \cost_r(D, \cen) + \delta \cdot \cost_r(N, c) \tag{From (\ref{eqn:costincrease})}
	\\&\le \cost_r(\cli, \cen) + \delta \cdot \cost_r(\cli, \cen) \tag{Since $D \uplus N \subseteq \cli$}
	\\&\le (1+\delta)^{k-m+1} \cdot \OPT_r \label{eqn:costbound}
\end{align}
Where the last inequality follows from the invariant. Further, observe that $c \in \cen \setminus \cen'$, which implies that $|(\cen \setminus \LR{c}) \setminus \cen'| \le m-1$. This, combined with (\ref{eqn:costbound}), shows that the solution $\cen \setminus \LR{c} = \cen' \uplus (\cen_o \setminus \LR{c})$ satisfies the conditions of the invariant for $m-1$. Then, by using inductive hypothesis, $\texttt{HybridClustering}(\cen, k, m-1)$, with probability at least $\alpha^{m-1} \ge \alpha^m$, returns a solution $\tilde{\cen}$ such that (a) $|\tilde{\cen}| \le k$, (b) $\cen \subseteq \tilde{\cen}$, and (c) $\cost_{(1+\delta)r'}(\cen, \tilde{\cen}) \le (1+\delta)^{k} \cdot \OPT_r$, completing the induction.

\subparagraph{Case 2.2. New cluster is large enough: $\bm{|L| > \delta^2/4 \cdot |D|}$.} That is, the largest ``untouched'' cluster is at least an $\delta$ fraction of the remaining points. Since $L = \cluster(c^\star, \cen)$ is the largest ``untouched'' cluster, $|L| \ge |\cluster(c_1, \cen)|$ for all $c_1 \in \cen \setminus \cen'$. Then, by \Cref{obs:definitions}, we have that,
\begin{align}
	|\cli_{q^\star}| = |D| + \sum_{c_1 \in \cen \setminus \cen'} |\cluster(c_1, \cen)| \le |D| + |\cen \setminus \cen'| \cdot |L| &\le \tfrac{4}{\delta^2} \cdot |L| + k \cdot |L| \le \tfrac{5k}{\delta^2} \cdot |L| \nonumber
\end{align}
In other words, $|L| \ge \frac{\delta^2}{5k} \cdot |\cli_{q^\star}|$. 

In the next claim, we summarize some properties of the sample $S$ chosen in line 7 of the algorithm, in the current case, i.e., when $|L| \ge \frac{\delta^2}{5k} |C_{q^\star}|$.
\begin{claim} \label{prop:probbound}
	Consider the iteration of the \textbf{for} loop of \Cref{lin:forloop}, when $q = q^\star$, and the corresponding sample $S_q$ obtained in \Cref{lin:sq-sample}. The following statements hold. 
	\begin{enumerate}
		\item With probability at least $1/2$, $S_q$ contains at least $\beta$ points of $L$.
		\item $S_q \cap L$ has the same distribution as selecting $|S_q \cap L|$ points uniformly at random from $L$.
		\item Let $L' \subseteq L$ be an arbitrary subset of size at least $\frac{\delta}{10} |L|$. Then, with probability at least $1/2$, $S_q$ contains at least $1$ point from $L$, i.e., $S_q \cap L' \neq \emptyset$.
	\end{enumerate}
\end{claim}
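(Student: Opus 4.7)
The plan is to establish each of the three properties by combining the sample size $\beta' = \beta \cdot 150k/\delta^3$ with the Case 2.2 assumption $|L| \ge (\delta^2/5k)\cdot |\cli_{q^\star}|$, and by exploiting that $S_q$ is a uniform without-replacement sample of size $\beta'$ drawn from $\cli_{q^\star}$.

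For part 1, I would first compute the mean
\begin{equation*}
\mu := \mathbb{E}[|S_q \cap L|] \;=\; \beta' \cdot \frac{|L|}{|\cli_{q^\star}|} \;\ge\; \frac{150k\beta}{\delta^3}\cdot \frac{\delta^2}{5k} \;=\; \frac{30\beta}{\delta} \;\ge\; 30\beta.
\end{equation*}
Since $|S_q \cap L|$ is hypergeometric, a Chernoff-type lower tail bound (which holds for hypergeometrics by negative association) gives $\Pr[|S_q \cap L| \le \mu/30] \le \exp(-(29/30)^2\mu/2)$, and this is well below $1/2$ because $\beta = 1/\delta^{c'}$ is bounded below by a positive constant. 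Hence $|S_q \cap L| \ge \beta$ with probability at least $1/2$.

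For part 2, I would argue by symmetry. Every $\beta'$-subset of $\cli_{q^\star}$ is equally likely, so conditional on $|S_q \cap L| = \ell$, the pair $(S_q \cap L,\, S_q \setminus L)$ is uniform over all $(A, B)$ with $A \subseteq L$, $|A| = \ell$ and $B \subseteq \cli_{q^\star} \setminus L$, $|B| = \beta' - \ell$. Projecting onto the first coordinate shows that $S_q \cap L$ is a uniformly random $\ell$-subset of $L$, exactly the distribution claimed. For part 3, given $L' \subseteq L$ with $|L'| \ge (\delta/10)|L|$, I would first lower-bound the expectation
\begin{equation*}
\mathbb{E}[|S_q \cap L'|] \;=\; \beta' \cdot \frac{|L'|}{|\cli_{q^\star}|} \;\ge\; \frac{150k\beta}{\delta^3}\cdot \frac{\delta}{10}\cdot \frac{\delta^2}{5k} \;=\; 3\beta \;\ge\; 3.
\end{equation*}
Then, since each fresh without-replacement draw that has so far missed $L'$ has conditional miss-probability $(|\cli_{q^\star}|-|L'|-i)/(|\cli_{q^\star}|-i) \le 1 - |L'|/|\cli_{q^\star}|$ (because $(a-1)/(b-1)\le a/b$ when $a\le b$), the product telescopes to
\begin{equation*}
\Pr[S_q \cap L' = \emptyset] \;\le\; \left(1 - \frac{|L'|}{|\cli_{q^\star}|}\right)^{\beta'} \;\le\; \exp\!\left(-\frac{\beta'\,|L'|}{|\cli_{q^\star}|}\right) \;\le\; e^{-3} \;<\; \tfrac{1}{2}.
\end{equation*}

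The only non-routine step is part 1's application of Chernoff for hypergeometric random variables; parts 2 and 3 are a direct symmetry observation and a telescoping bound on a miss-probability, respectively. The entire argument rests on the fact that the oversized $1/\delta^3$ factor baked into $\beta'$ leaves enough slack to simultaneously absorb the concentration loss in part 1 and the extra factor of $\delta/10$ incurred when passing from $L$ to $L'$ in part 3, while still leaving $\mathbb{E}[|S_q \cap L'|]$ at a constant level.
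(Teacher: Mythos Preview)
Your proof is correct and follows essentially the same line as the paper, which gives only a three-sentence sketch: it cites ``Markov's inequality'' for item~1, ``conditional distributions'' for item~2, and says item~3 is ``analogous'' using the bound on $|L'|$. Your argument is the rigorous version of that sketch---in particular, you correctly use a Chernoff-type tail bound (which is what is actually needed for the lower tail) where the paper loosely invokes Markov, and your telescoping miss-probability computation for item~3 makes the ``analogous'' step explicit.
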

\begin{proof}
	Recall that $\beta' = \beta \cdot \tfrac{150k}{\delta^3}$ and $|L| \ge \tfrac{\delta^2}{5k}|C_{q^\star}|$. So, the first item follows, say, via Markov's inequality \footnote{In fact, a closer inspection reveals that the probability is much closer to $1$, but ``at least $1/2$'' suffices for our purpose.}. The second item is an easy consequence of conditional distributions. The proof of the third item is analogous to the first item, combined with the bound on $|L'|$.
\end{proof}

Now we condition on the event that $|S_{q^\star} \cap L| \ge \beta$, which, by \Cref{prop:probbound} happens with probability at least $\frac{1}{2}$. Then, let $S' \subseteq S \cap L$ be such a subset of size $\beta$. Let $L = \Lnear \uplus \Lfar$, where $\Lnear = \LR{p \in L: \d(p, c^\star) \le \tfrac{8r}{\delta}}$ and $\Lfar = \LR{p \in L: \d(p, c^\star) > \tfrac{8r}{\delta}}$. We consider different cases depending on the relative sizes of $\Lnear$ and $\Lfar$. In the first case below (2.2.1), when $|\Lnear|$ is not very tiny compared to $|\Lfar|$, we show that our sample contains at least one point from $\Lnear$ with good probability, and hence an $\varepsilon r$ grid around that point will contain an approximate center. In the complementary case (2.2.2), $|\Lnear|$ is very tiny compared to $|\Lfar|$, and in this case, we argue that, instead of finding an approximate \textsc{Hybrid $1$-Median}, we can focus on finding an approximate \textsc{$1$-Median}, which can be found using the sample. Now we formally analyze each of these cases.

\subparagraph{Case 2.2.1. $\bm{|\Lnear| > \frac{\delta}{8} \cdot |\Lfar|}$.} 
In this case, letting $L' \gets \Lnear$ in \Cref{prop:probbound}, we infer that with at least probability $1/2$, $S' \cap \Lnear \neq \emptyset$. We condition on this event. Then, since $\d(p, c^\star) \le \frac{8r}{\delta}$, it follows that there exists a $\tilde{c}^\star \in \grid(p, \frac{8r}{\delta}, \delta r)$, such that $\d(\tilde{c}^\star, c^\star) \le \delta r$. Since we branch on each point in $\bigcup_{p' \in S} \grid(p, \frac{8r}{\delta}, \delta r)$, we will branch on $\tilde{c}^\star$ in particular. Then, by \Cref{cl:nearbycenter}, \texttt{HybridClustering}$(\cen' \cup \LR{\tilde{c}^\star}, k, m-1)$ returns a solution $\tilde{\cen}$, with probability at least $1/2 \cdot \alpha^{m-1} \ge \alpha^{m}$.

\subparagraph{Case 2.2.2. $\bm{|\Lnear| \le \frac{\delta}{8} \cdot |\Lfar|}$.}

We prove two claims, namely \Cref{cl:firstclaim}, and \Cref{cl:approx-hybrid-equiv}. The latter essentially reduces the problem to finding an approximate solution to $1$-median on $L$. Intuitively speaking, this follows from the following two reasons: (1) As we show in (\ref{eqn:lfar-approx-bound}), a similar statement holds for the points in $\Lfar$. This essentially follows from the fact that, since each point of $\Lfar$ has distance at least $\frac{8r}{\delta}$ to $c$, the subtraction of $r$ from their distances has little effect on the cost, and (2) Due to the case assumption, the points of $\Lfar$ vastly outnumber the points of $\Lnear$. Hence, the preceding claim also translates to the points of $L = \Lfar \cup \Lnear$, at a further small approximation error.

\begin{claim}\label{cl:firstclaim}\ \  $\displaystyle
		\sum_{p \in L} \d_r(p, c^\star) \le \sum_{p \in L} \d(p, c^\star) \le (1+\tfrac{3\delta}{4}) \cdot \sum_{p \in L} \d_r(p, c^\star)$.
\end{claim}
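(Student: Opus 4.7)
The plan is as follows. The first inequality $\sum_{p \in L}\d_r(p,c^\star) \le \sum_{p \in L}\d(p,c^\star)$ is immediate from the definition $\d_r(p,c^\star) = \max\{\d(p,c^\star)-r,\,0\} \le \d(p,c^\star)$, so the real content is the upper bound on $\sum_{p \in L}\d(p,c^\star)$ in terms of $\sum_{p \in L}\d_r(p,c^\star)$. I will obtain this by splitting $L = \Lnear \uplus \Lfar$ and treating the two pieces by quite different arguments, then combining them.

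For the far piece, each $p \in \Lfar$ satisfies $\d(p,c^\star) > 8r/\delta$, so $r < (\delta/8)\d(p,c^\star)$ and $\d_r(p,c^\star) = \d(p,c^\star) - r > (1-\delta/8)\d(p,c^\star)$. Rearranging and using $\delta < 1/2$, this gives the pointwise bound $\d(p,c^\star) \le (1+\delta/4)\d_r(p,c^\star)$ for every $p \in \Lfar$ (the inequality $(1-\delta/8)^{-1} \le 1+\delta/4$ holds for $\delta \le 4$, which is our regime). Summing, $\sum_{p \in \Lfar}\d(p,c^\star) \le (1+\delta/4)\sum_{p \in \Lfar}\d_r(p,c^\star)$.

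For the near piece a pointwise bound is not available (indeed $\d_r$ could vanish), so I switch to a global cardinality argument driven by the Case 2.2.2 hypothesis $|\Lnear| \le (\delta/8)|\Lfar|$. Since every $p \in \Lnear$ satisfies $\d(p,c^\star) \le 8r/\delta$, we get $\sum_{p\in \Lnear}\d(p,c^\star) \le (8r/\delta)|\Lnear| \le r|\Lfar|$. To convert this back into $\d_r$-cost on $\Lfar$, I use once more that $r < (\delta/8)\d(p,c^\star)$ on $\Lfar$, yielding $r|\Lfar| \le (\delta/8)\sum_{p \in \Lfar}\d(p,c^\star) \le (\delta/8)(1+\delta/4)\sum_{p \in \Lfar}\d_r(p,c^\star) \le (\delta/4)\sum_{p \in \Lfar}\d_r(p,c^\star)$.

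Combining the two pieces,
\[
\sum_{p\in L}\d(p,c^\star) \le \bigl((1+\delta/4) + \delta/4\bigr)\sum_{p \in \Lfar}\d_r(p,c^\star) = (1+\delta/2)\sum_{p \in \Lfar}\d_r(p,c^\star) \le (1+3\delta/4)\sum_{p \in L}\d_r(p,c^\star),
\]
which is the desired inequality. There is no real obstacle here; the only delicate point is that the $\Lnear$-mass has to be charged to $\Lfar$ by a two-step argument (first bound $\sum_{\Lnear}\d$ by $r|\Lfar|$ using the case assumption, then bound $r|\Lfar|$ by a small multiple of the $\Lfar$ $\d_r$-cost using the definition of $\Lfar$), and one must track the constants carefully to land inside the slack $3\delta/4$ rather than the tighter $\delta/2$ we actually obtain.
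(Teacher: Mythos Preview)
Your proof is correct and follows essentially the same approach as the paper: split $L = \Lnear \uplus \Lfar$, bound the far contribution pointwise via $\d(p,c^\star) \le (1+\delta/4)\d_r(p,c^\star)$, and charge the near contribution to the far $\d_r$-cost using the Case~2.2.2 cardinality assumption together with the lower bound $\d(p,c^\star) > 8r/\delta$ on $\Lfar$. The only cosmetic difference is that you route the $\Lnear$ charge through the intermediate quantity $r|\Lfar|$, which yields the slightly sharper factor $1+\delta/2$ before relaxing to $1+3\delta/4$; the paper combines the two $(1+\delta/4)$ losses multiplicatively instead.
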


\begin{proof}
	First, consider,
	\begin{align}
		\sum_{p \in \Lfar} \d_{r}(p, c^\star) = \sum_{p \in \Lfar} \d(p, c^\star) - r &\ge \sum_{p \in \Lfar} \d(p, c^\star) - \tfrac{\delta}{8} \cdot \d(p, c^\star)\tag{Since $\d(p, c^\star) \ge \frac{8r}{\delta} > r$} \label{eqn:lfar-part1}
	\end{align}
	Then, the inequality between the first and the last term can be rewritten as,
	\begin{equation}
		\sum_{p \in \Lfar}\d(p, c^\star) \le \frac{1}{1-\delta/8} \cdot \sum_{p \in \Lfar} \d_r(p, c^\star) \le (1+\tfrac{\delta}{4}) \cdot \sum_{p \in \Lfar} \d_r(p, c^\star) \label{eqn:lfar-approx-bound}
	\end{equation}
	
	The following inequality will be used later to show that the contribution of points of $\Lnear$ is negligible to the overall cost.
	\begin{align}
		\sum_{p \in \Lfar} \d(p, c^\star) \ge \sum_{p \in \Lfar} \d_r(p, c^\star) &\ge \sum_{p \in \Lfar}  (1-\tfrac{\delta}{8}) \cdot \d(p, c^\star) \tag{From (\ref{eqn:lfar-approx-bound})}
		\\&\ge (1-\tfrac{\delta}{8}) \cdot \tfrac{8r}{\delta} \cdot |\Lfar| \tag{Definition of $\Lfar$}
		\\&\ge \tfrac{1}{2} \cdot \tfrac{8r}{\delta} \cdot \tfrac{8}{\delta} \cdot |\Lnear| \tag{Case assumption: $|\Lfar| \ge \frac{8}{\delta} \cdot |\Lfar|$}
		\\&= \tfrac{4}{\delta} \cdot \tfrac{8r}{\delta} \cdot |\Lnear| \nonumber
		\\&\ge \tfrac{4}{\delta} \cdot \sum_{p \in \Lnear} \d_r(p, c^\star) \label{eqn:lnear-bound}
	\end{align}
	Where the last inequality follows from the definition of $\Lnear$.
	
	The next sequence of inequalities shows a bound similar to (\ref{eqn:lfar-approx-bound}), but when the sum is taken over all points of $L$ (instead of only the points of $\Lfar$, as in \Cref{eqn:lfar-approx-bound}).
	\begin{align}
		\sum_{p \in L} \d_r(p, c^\star) \le \sum_{p \in L} \d(p, c^\star) &= \sum_{p \in \Lnear} \d_r(p, c^\star) + \sum_{p \in \Lfar} \d_r(p, c^\star) \nonumber
		\\&= (1+\tfrac{\delta}{4}) \cdot \sum_{p \in \Lfar} \d(p, c^\star) \tag{From (\ref{eqn:lnear-bound})}
		\\&\le (1+\tfrac{\delta}{4}) \cdot (1+\tfrac{\delta}{4}) \cdot \sum_{p \in \Lfar} \d_r(p, c^\star) \tag{From (\ref{eqn:lfar-approx-bound})}
		\\&\le (1+\tfrac{3\delta}{4}) \cdot \sum_{p \in L} \d_r(p, c^\star) 
	\end{align}
	Where the last inequality follows from (i) $(1+\tfrac{\delta}{4}) \cdot (1+\tfrac{\delta}{4}) \le 1+\tfrac{3\delta}{4}$ and (ii) $\Lfar \subseteq L$. This completes the proof of the claim.
\end{proof}

Using this claim, we prove the following claim, which shows that it is sufficient to find an approximate $1$-median solution for $L$, which will also be a good approximation for \textsc{Hybrid $1$-Median} for $L$. To this end, let $\tilde{c}^\star \in \real^d$ denote the optimal $1$-median for $L$. 
\begin{claim} \label{cl:approx-hybrid-equiv}
	Let $c_1$ be an $(1+\tfrac{\delta}{8})$-approximation for $1$-\textsc{Median} for $L$, i.e., $\sum_{p \in L} \d(p, c_1) \le (1+\tfrac{\delta}{8}) \cdot \sum_{p \in L} \d(p, \tilde{c}^\star)$. Then, it is also a $(1+\delta)$-approximation for \textsc{Hybrid $1$-Median} for $L$, i.e., 
	\begin{equation}
		\sum_{p \in L} \d_r(p, c_1) \le (1+\delta) \cdot \sum_{p \in L} \d_r(p, c^\star) \label{eqn:approx-hybrid-equiv}
	\end{equation}
\end{claim}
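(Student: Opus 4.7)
The plan is to chain together four simple bounds: the trivial inequality $\d_r \le \d$, the assumed $(1+\delta/8)$-approximation guarantee for $1$-median on $L$, the optimality of $\tilde{c}^\star$ as the $1$-median center on $L$, and finally Claim \ref{cl:firstclaim}, which has already done the real work of relating the $\d$ and $\d_r$ costs when the center is $c^\star$.

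Concretely, I would start from $\sum_{p \in L} \d_r(p, c_1) \le \sum_{p \in L} \d(p, c_1)$, which is immediate from the definition of $\d_r$. I would then apply the approximation hypothesis to get $\sum_{p \in L} \d(p, c_1) \le (1+\delta/8) \sum_{p \in L} \d(p, \tilde{c}^\star)$. Since $\tilde{c}^\star$ is the optimal $1$-median for $L$ and $c^\star \in \mathbb{R}^d$ is a feasible $1$-median center, $\sum_{p \in L} \d(p, \tilde{c}^\star) \le \sum_{p \in L} \d(p, c^\star)$. Finally, Claim \ref{cl:firstclaim} gives $\sum_{p \in L} \d(p, c^\star) \le (1+3\delta/4) \sum_{p \in L} \d_r(p, c^\star)$, which is precisely the tool the authors built for this purpose.

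Putting the chain together yields
\[
\sum_{p \in L} \d_r(p, c_1) \;\le\; (1+\delta/8)(1+3\delta/4) \sum_{p \in L} \d_r(p, c^\star),
\]
and the only remaining task is verifying that the constant is at most $1+\delta$. Expanding, $(1+\delta/8)(1+3\delta/4) = 1 + 7\delta/8 + 3\delta^2/32$, and since $\delta < 1/2$ we have $3\delta^2/32 \le 3\delta/32$, so the coefficient is at most $1 + 7\delta/8 + 3\delta/32 = 1 + 31\delta/32 \le 1 + \delta$, which gives (\ref{eqn:approx-hybrid-equiv}).

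There is no real obstacle here: all the heavy lifting (controlling the contribution of $L_{\text{near}}$ vs $L_{\text{far}}$ and turning $\d_r$-sums into $\d$-sums for the center $c^\star$) is already encapsulated in Claim \ref{cl:firstclaim}. The only thing to be mildly careful about is the direction of the approximation: the $1$-median guarantee is stated in terms of the $\d$-cost against $\tilde{c}^\star$, not $c^\star$, so we must pass through $\tilde{c}^\star$ using its optimality for $1$-median before invoking Claim \ref{cl:firstclaim} at the center $c^\star$. Once that routing is in place, the whole argument is a three-line chain of inequalities followed by a numerical check on the $\delta$ constants.
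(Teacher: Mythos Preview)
Your proof is correct and follows exactly the same chain of inequalities as the paper: $\d_r \le \d$, the $(1+\delta/8)$-approximation hypothesis, optimality of $\tilde{c}^\star$ for $1$-median, and then Claim~\ref{cl:firstclaim} to pass back from $\d$ to $\d_r$ at $c^\star$. Your explicit verification that $(1+\delta/8)(1+3\delta/4) \le 1+\delta$ is slightly more detailed than the paper's, which simply asserts the bound.
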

\begin{proof}
	Let $c_1, \tilde{c}^\star \in \real^d$ as defined above. Then, 
	\begin{align*}
		\sum_{p \in L} \d_r(p, c_1) \le \sum_{p \in L} \d(p, c_1) &\le (1+\tfrac{\delta}{8}) \cdot \sum_{p \in L} \d(p, \tilde{c}^\star) \tag{By definition of $c_1$}
		\\&\le (1+\tfrac{\delta}{8}) \cdot \sum_{p \in L} \d(p, c^\star) \tag{Since $c_1 \in \real^d$ is an optimal $1$-median and $c^\star \in \real^d$ is a feasible median}
		\\&\le (1+\tfrac{\delta}{8}) \cdot (1+\tfrac{3\delta}{4}) \cdot \sum_{p \in L} \d_r(p, c^\star) \tag{From \Cref{cl:firstclaim}}
		\\&\le (1+\delta) \cdot \sum_{p \in L} \d_r(p, c^\star) \qedhere
	\end{align*}
\end{proof}
Thus, now the task reduces to finding a $1+\tfrac{\delta}{8}$-approximate $1$-\textsc{Median} solution for $L$. To this end, we have the following result from \cite{KumarSS05,KumarSS10}.

\begin{proposition}[\cite{KumarSS05,KumarSS10}] \label{prop:samplemedian}
	Let $X \subset \real^d$ be a set of $n$ points and $0 < \delta < 1$. Let $S \subseteq X$ be a uniform sample chosen from $X$ of size $\beta = \lr{\frac{1}{\delta}}^{c'}$. Then, there exists an algorithm that runs in time $2^{\Oh(1/\delta^c)} d$, and with probability at least $\alpha'$, returns an $(1+\delta)$-approximate $1$-median for $X$. Here, $c, c'$ are absolute constants independent of the dimension $d$.
\end{proposition}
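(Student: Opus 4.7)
The plan is to follow the Kumar-Sabharwal-Sen sampling-based framework. The starting observation is classical: for a uniformly random point $p \in X$ and the optimal $1$-median $c^\star$, the triangle inequality gives $\sum_{x \in X} \d(p,x) \le \sum_{x \in X} \d(p, c^\star) + \sum_{x \in X} \d(c^\star, x)$; taking expectations over $p$ yields $\mathbb{E}_p\bigl[\sum_{x \in X} \d(p,x)\bigr] \le 2 \cdot \opt$. Hence with constant probability a single uniform sample from $X$ is an $\Oh(1)$-approximate $1$-median, and by a standard amplification argument a uniform sample $S$ of size $\beta = (1/\delta)^{c'}$ contains such a constant-factor center with probability at least some constant.

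To sharpen this to a $(1+\delta)$-approximation, I would use an $L_1$ analogue of the Inaba-Katoh-Imai argument for $1$-mean: once a constant-factor candidate $c_0$ is in hand, the optimal median can be localized to a ball $B$ of radius $\Oh(\opt/|X|)$ around $c_0$, and within $B$ a suitably spaced grid of $2^{\Oh(1/\delta^c)}$ points contains a $(1+\delta)$-approximate median. Concretely, the algorithm enumerates all subsets $T \subseteq S$ of appropriate (constant-depending-on-$\delta$) size, takes the centroid $\bar T$ of each, overlays a small local grid around $\bar T$, and returns the candidate achieving the best sample-estimated cost. Since $|S| = (1/\delta)^{\Oh(1)}$, the total number of candidates is $2^{\Oh(1/\delta^c)}$ and each candidate is computable in $\Oh(d)$ time, yielding the claimed running time.

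The main technical hurdle is the concentration argument: one must show that, with constant probability, some sub-sample $T \subseteq S$ yields a centroid lying within distance $\Oh(\delta \cdot \opt / |X|)$ of a $(1+\delta)$-approximate median. Unlike the $1$-mean case, where a direct second-moment bound gives this via linearity of the squared-distance objective, the $L_1$ objective here requires a more delicate moment inequality --- using the triangle inequality to bound pairwise distance moments --- together with the \emph{superset sampling} trick, where enumerating all size-$\beta$ subsets of $S$ circumvents the fact that one cannot identify the ``good'' sub-sample in advance. Combining these pieces delivers the stated $2^{\Oh(1/\delta^c)} \cdot d$ running time and constant success probability.
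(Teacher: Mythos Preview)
The paper does not give its own proof of this proposition; it is invoked as a black-box result from \cite{KumarSS05,KumarSS10}. So there is nothing in the paper to compare your sketch against beyond the bare statement.

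That said, your sketch has a real gap at precisely the point where the proposition is nontrivial. You assert that after localizing the optimum to a ball $B$, ``a suitably spaced grid of $2^{\Oh(1/\delta^c)}$ points'' inside $B$ contains a $(1+\delta)$-approximate median. But a grid in a $d$-dimensional ball at relative spacing $\delta$ has $(1/\delta)^{\Theta(d)}$ cells, not $2^{\Oh(1/\delta^c)}$; the dimension-independence of the candidate set is exactly the content of the proposition and cannot simply be asserted. The Kumar--Sabharwal--Sen argument avoids this by constructing candidates \emph{from the sample itself} (so that all enumeration lives in the $\Oh(\beta)$-dimensional affine span of $S$, whose dimension is $(1/\delta)^{\Oh(1)}$ regardless of $d$), together with a structural lemma showing that some such candidate is near-optimal. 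Without that step, your running time is $(1/\delta)^{\Oh(d)}\cdot d$, not $2^{\Oh(1/\delta^c)}\cdot d$.

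A related issue: taking the centroid $\bar T$ of a sub-sample is the right primitive for $1$-\textsc{Mean}, where the squared-distance objective decomposes via the parallel-axis identity and the sample mean concentrates around the population mean. For the $L_1$ objective of $1$-\textsc{Median} there is no such identity, and the centroid of a uniform sample need not be close (in the relevant sense) to the geometric median. Your ``main technical hurdle'' paragraph acknowledges this but does not supply a substitute; the Inaba--Katoh--Imai analogy breaks down here, and one needs a genuinely different argument (e.g., the ball-shrinking / near-point arguments in \cite{KumarSS10} or the coreset approach of \cite{BadoiuHI02}) to get the dimension-free guarantee.
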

We combine the properties of the sample $S_{q^\star}$ proved in \Cref{prop:probbound} along with the previous proposition, to complete the proof. To this end, note that the first item of \Cref{prop:probbound} implies that, with probability at least $1/2$, $S_{q^\star}$ contains at least $\beta = \frac{1}{\delta^{c}}$ points from $L$. Then, we use the algorithm of \Cref{prop:samplemedian}, that returns with probability at least $\alpha'$, a $(1+\frac{\delta}{8})$-approximate $1$-median $c_1 \in \real^d$ for $L$. It follows that,
\begin{align}
  &\sum_{c \in \cen' \cup \LR{c_1}}\cost_{r'}(\cluster(c, \cen), c) + \sum_{c \in \cen_o \setminus \LR{c^\star}}\cost_{r}(\cluster(c, \cen), c) \nonumber\\
	\le& \sum_{c \in \cen'}\cost_{r'}(\cluster(c, \cen), c) +  \sum_{c \in \cen_o}\cost_{r}(\cluster(c, \cen), c) - \cost_r(\cluster(c^\star, \cen), c^\star) + \sum_{p \in L} \d_r(p, c_1) \nonumber\\
	\le & \sum_{c \in \cen'}\cost_{r'}(\cluster(c, \cen), c) +  \sum_{c \in \cen_o}\cost_{r}(\cluster(c, \cen), c) - \cost_r(\cluster(c^\star, \cen), c^\star)\nonumber \\
	& + \delta \cdot \sum_{c \in \cen_o}\cost_{r}(\cluster(c, \cen), c) \tag{From \Cref{cl:approx-hybrid-equiv}}\\
	\le &(1+\delta) \cdot \lr{\sum_{c \in \cen'}\cost_{r'}(\cluster(c, \cen), c) + \sum_{c \in \cen_o}\cost_{r}(\cluster(c, \cen), c)} \nonumber\\
	\le & (1+\delta)^{k-m+1} \cdot \OPT_r
\end{align}
Then, by induction hypothesis, \texttt{HybridClustering}$(\cen' \cup \LR{c_1}, k, m-1)$, with probability at least $\alpha^{m-1}$, returns a solution $\tilde{\cen}$ such that $\cost_{r'}(\cli, \tilde{\cen}) \le (1+\delta)^k \cdot \OPT_r$. The overall probability of this event is at least $\frac{1}{2} \cdot \alpha' \cdot \alpha^{m-1} = \alpha^{m}$, completing the induction.

This finishes the case analysis, and thus we have established the invariant using induction. Using the invariant, we can show the following key lemma.

\begin{lemma} \label{lem:runtime}
	{\tt HybridClustering}$(\emptyset, k, k)$ returns a $(1+\varepsilon, 1+\varepsilon)$-bicriteria approximation solution to the given instance of \probname with probability at least $\alpha^k$ for some constant $0 < \alpha < 1$. 
\end{lemma}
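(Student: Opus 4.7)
\textbf{Proof proposal for \Cref{lem:runtime}.} The plan is to instantiate the invariant established in the previous pages at the top-level call, namely with $\cen' = \emptyset$ and $m = k$, and then convert the guarantee provided by the invariant into the desired bicriteria bound. The hypotheses of the invariant are immediate in this case: taking $\cen_o$ to be an optimal solution $\cen^*$ for $(\cli,k,r)$ (of size at most $k$), the left-hand side of (\ref{eqn:invariant}) equals $\sum_{c \in \cen^*}\cost_r(\cluster(c,\cen^*),c) = \OPT_r$, which matches the bound $(1+\delta)^{k-m}\OPT_r = \OPT_r$ at $m=k$. Hence, with probability at least $\alpha^k$, the top-level call returns a set $\tilde\cen \subset \real^d$ with $|\tilde\cen|\le k$ satisfying $\cost_{r'}(\cli,\tilde\cen) \le (1+\delta)^k\cdot\OPT_r$.

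Next I would verify that this guarantee is really a $(1+\varepsilon,1+\varepsilon)$-bicriteria guarantee. Recall $\delta = \varepsilon/(10k)$ and $r' = (1+\delta/3)r$. On the cost side, $(1+\delta)^k \le e^{k\delta} = e^{\varepsilon/10} \le 1+\varepsilon$ since $\varepsilon<1$. On the radius side, $r' = (1+\delta/3)r \le (1+\varepsilon)r$. Monotonicity of $\d_{(\cdot)}$ in the radius then gives $\cost_{(1+\varepsilon)r}(\cli,\tilde\cen) \le \cost_{r'}(\cli,\tilde\cen) \le (1+\varepsilon)\OPT_r$, which is exactly the required bicriteria bound. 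The success probability $\alpha^k$ can be amplified to an arbitrary positive constant by $\Oh((1/\alpha)^k) = 2^{\Oh(k)}$ independent repetitions; this does not change the asymptotic runtime claim.

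For the running time, I would analyze the recursion tree of depth at most $k$ rooted at \texttt{HybridClustering}$(\emptyset,k,k)$. At each node we build the candidate set $R$ from three sources: (i) the grids around the at most $k$ points of $\cen'$, each contributing $(\sqrt d/\delta)^{\Oh(d)}$ points by \Cref{obs:grid}; (ii) one grid of the same size per sample point, over the sample $S_q$ of size $\beta' = \beta\cdot 150k/\delta^3$ and over $\Oh(\log n/\varepsilon)$ choices of $q$ (using the aspect-ratio bound from \Cref{lem:aspectratio}); and (iii) the $\binom{\beta'}{\beta}$ candidates produced by \texttt{ApproxSolutionOnSample} across all $q$. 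Since $\beta = (1/\delta)^{\Oh(1)}$, each $R$ has size $(kd/\varepsilon)^{\Oh(d)} + 2^{(k/\varepsilon)^{\Oh(1)}}\cdot \log n$, and the branching factor at each recursive node is $|R|+1$. Multiplying over $k$ levels, and including the $n^{\Oh(1)}d$ cost of \texttt{ApproxSolutionOnSample} from \Cref{prop:samplemedian}, the total running time is $\bigl(|R|+1\bigr)^k \cdot n^{\Oh(1)} = 2^{(kd/\varepsilon)^{\Oh(1)}}\cdot n^{\Oh(1)}$.

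The only step that is not almost immediate from the invariant is checking that the various numeric choices ($\delta = \varepsilon/(10k)$, $r' = (1+\delta/3)r$, the sample sizes $\beta,\beta'$) indeed compose to yield $(1+\varepsilon,1+\varepsilon)$ and an FPT runtime; I expect this to be routine since every constant was chosen with that accounting in mind. The main potential subtlety is the probability amplification: repetitions only guarantee a bicriteria solution, not its identification, so I would also note that among all the candidate solutions $\tilde\cen$ produced by all recursive calls and all repetitions, the algorithm can simply output the one minimizing $\cost_{r'}(\cli,\cdot)$, which at least matches the quality of any individual successful run.
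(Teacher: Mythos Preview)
Your proposal is correct and follows essentially the same approach as the paper: instantiate the invariant at the top-level call with $\cen'=\emptyset$, $\cen_o=\cen^*$, convert $(1+\delta)^k$ and $r'=(1+\delta/3)r$ into the $(1+\varepsilon,1+\varepsilon)$ bound, and bound the running time by estimating $|R|$ and unrolling the depth-$k$ recursion. The only minor deviation is that your discussion of probability amplification and output selection properly belongs to the proof of \Cref{thm:fptas} rather than this lemma (which only claims success probability $\alpha^k$), but that is harmless.
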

\begin{proof}
	We first show the following:
	\begin{claim} $\displaystyle |R| \le \lr{\tfrac{k\sqrt{d}}{\delta}}^{\Oh(d)} + \frac{k \log n}{\delta^{\Oh(1)}} \cdot \lr{\lr{\tfrac{\sqrt{d}}{\delta}}^{\Oh(d)} + \binom{\beta'}{\beta}} \le (\log n) \cdot 2^{(kd/\delta)^{\Oh(1)}}.$
	\end{claim}
	\begin{proof}
		First, in \cref{lin:centergrid}, we add the points returned by $\grid(c', 16r, \delta r)$ for each $c' \in \cen'$. The number of such points is $\lr{\frac{16r\sqrt{d}}{\delta r}}^d = \lr{\frac{\sqrt{d}}{\delta}}^{\Oh(d)}$. Next, there are at most $\frac{\log n}{\delta^{\Oh(1)}}$ values for $q$ (this follows from the second preprocessing step, cf.~\Cref{lem:aspectratio}), corresponding to each iteration of the for loop. In each iteration, we take a sample $S$ of size $\beta' = \Oh{\lr{\frac{150k\beta}{\delta^3}}}$. Then, for each $p \in S$, we add to $R$ the points of $\grid(p, 8r/\delta, \delta r)$, and the number of such points is at most $\lr{\frac{8\sqrt{d}}{\delta^2}}^d = \lr{\frac{\sqrt{d}}{\delta^2}}^{\Oh(d)}$. In addition, we iterate over each subset $S' \subseteq S$ of size $\beta$, and the number of such subsets is $\binom{\beta'}{\beta} \le \lr{\frac{e \beta'}{\beta}}^\beta = \lr{\frac{k}{\delta^3 \cdot \beta}}^{\beta} = \lr{\frac{k}{\delta^3}}^{(1/\delta)^{\Oh(1)}} = k^{1/\delta^{\Oh(1)}}$. Thus, overall, the size of $R$ is bounded by $(\log n) \cdot 2^{(kd/\delta)^{\Oh(1)}}$.
	\end{proof}
	To bound the running time of the algorithm, let $T(m)$ denote an upper bound on \\\texttt{HybridClustering}$(F', k, m)$ for any $F' \subset \real^d$. Note that we make a recursive call on each point in $R$. Further by \Cref{prop:samplemedian}, the time taken to compute a center in \cref{lin:approxsample} is at most $2^{(1/\delta)^{\Oh(1)}}$; and this algorithm is used in each of the at most $\frac{k \log n}{(1/\delta)^{\Oh(1)}} \cdot \binom{\beta'}{\beta} \le (\log n) \cdot k^{\lr{1/\delta}^{\Oh(1)}}$. Thus, $T(m)$ can be bounded by the following recurrence.
	\begin{align*}
		T(m) &\le |R| \cdot T(m-1) + (\log n) \cdot k^{\lr{1/\delta}^{\Oh(1)}} \cdot n^{\Oh(1)}
		\\&\le (\log n) \cdot 2^{\lr{\frac{kd}{\delta}}^{\Oh(1)}} \cdot T(m-1) + k^{\lr{1/\delta}^{\Oh(1)}} \cdot n^{\Oh(1)}
	\end{align*}
	It can be shown that this recurrence solves to $T(m) \le 2^{\lr{\frac{kd}{\delta}}^{\Oh(1)}} \cdot n^{\Oh(1)}$ -- here we use the standard argument that $(\log n)^k \le k^{\Oh(k)} \cdot n^{\Oh(1)}$. 
	
	Finally, note that our first call to the recursive algorithm is \texttt{HybridClustering}$(\cen' = \emptyset, k, k)$. At this point, the precondition of the invariant is satisfying setting $\cen_o \gets \cen^*$, an optimal solution satisfying $\cost_r(P, \cen^*) = \opt_r$. Then, the correctness of the invariant implies that, with probability at least $\alpha^k$, the algorithm returns a solution $\tilde{\cen}$ of size at most $k$, that is a $(1+\varepsilon, 1+\varepsilon)$-bicriteria approximation -- here we use that $\delta = \frac{\varepsilon}{10k}$, which implies that $(1+\delta)^k \le (1+\varepsilon)$.
\end{proof}

We now conclude with the following theorem, which is restated for convenience.
\fptasthm*
\begin{proof}
	From \Cref{lem:runtime}, the success probability of the algorithm is $\alpha^k$ for some constant $\alpha > 0$. Thus, we need to repeat the algorithm $\alpha^{-k}$ times to boost the probability to at least a positive constant, which gets absorbed in the $2^{\lr{\frac{kd}{\varepsilon}}^{\Oh(1)}}$ factor. 
\end{proof}

\paragraph*{A Hybrid of {\sc $k$-Center} and $k$-{\sc Means}.} We note that an almost identical algorithm also implies a $(1+\varepsilon, 1+\varepsilon)$ bicriteria approximation for an analogous generalization of $k$-\textsc{Center} and $k$-\textsc{Means}. In this problem, the objective of (\ref{eqn:objective}) is replaced by the following: $\cost_r(C, F) \coloneqq \sum_{p \in C} \d_r(p, C)^2$. Let us refer to this problem as \textsc{Hybrid $(k, 2)$-Clustering} -- the ``$2$'' in the name refers to the squares of the distance-thresholds that feature in the objective. Most of the analysis can be adapted to deal with the squares of the distances, by appropriately changing the sizes and distance-thresholds. The only significant change is that instead of \Cref{prop:samplemedian}, one needs to use an algorithm that computes an approximate $1$-\textsc{Means} solution given a large enough uniform sample of the cluster -- such an algorithm can also be found in \cite{KumarSS10}. Then, one obtains the following theorem.

\begin{theorem}
	Let $0 < \varepsilon < 1$. There exists a randomized algorithm that, given an instance of {\sc Hybrid $(k, 2)$-Clustering} in $\real^d$, runs in time $2^{(\frac{kd}{\varepsilon})^{\Oh(1)}} \cdot n^{\Oh(1)}$ and returns a $(1+\varepsilon, 1+\varepsilon)$-approximation with probability at least a positive constant.
\end{theorem}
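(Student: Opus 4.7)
The plan is to mirror the structure of the algorithm and analysis used for \probname, making only the changes forced by replacing each $\d_r(\cdot,\cdot)$ in the objective by $\d_r(\cdot,\cdot)^2$. First I would redo the preprocessing. The analogue of \Cref{lem:kcenterlemma} is unchanged conceptually: if $r^2 > \OPT_r$ in the squared setting, then an optimal solution still covers $P$ by $k$ balls of radius $2r$, so a $2$-approximation for \kcenter plus a $\delta r$-grid around the approximate centers yields a candidate set $R$ of size $(d/\varepsilon)^{O(dk)}$ containing, within distance $\delta r$, a close replacement of every optimal center; squaring distances costs only a $(1+O(\delta))$ factor in the per-point contribution. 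The analogue of \Cref{lem:kmedlemma} is almost identical once one observes that $\d_r(p,F)^2 \le \d(p,F)^2$ and $\d(p,F)^2 \le \d_r(p,F)^2 + 2r\cdot \d(p,F) + r^2$, so whenever the squared $\OPT_r$ dominates $nr^2$ (and $nr\cdot \dmax$), a $(1+\varepsilon/c)$-approximation for $k$-\textsc{Means} transfers to a $(1+\varepsilon)$-approximation for the hybrid squared problem. The aspect-ratio reduction of \Cref{lem:aspectratio} is unchanged up to constants, since squaring only re-scales the relative weights.

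Next I would run the same recursive algorithm \texttt{HybridClustering}, with two substantive modifications. First, the threshold separating \emph{$1$-center-like} and \emph{$1$-median-like} behaviour must be shifted: the right cut-off for $\Lnear$ versus $\Lfar$ becomes $\d(p,c^\star) \le (r/\delta)$ up to an appropriate constant, because in the squared regime one needs $\d(p,c^\star)^2 - (\d(p,c^\star)-r)^2 = 2r\d(p,c^\star)-r^2$ to be an $O(\delta)$ fraction of $\d(p,c^\star)^2$, which requires $\d(p,c^\star) \gtrsim r/\delta$. Second, in line~\ref{lin:approxsample} I would replace $\texttt{ApproxSolutionOnSample}$ by the sample-based $(1+\delta)$-approximation for $1$-\textsc{Means} from~\cite{KumarSS10}, which has the same form: a uniform sample of size $(1/\delta)^{O(1)}$ from the target cluster yields, with constant probability, a $(1+\delta)$-approximate Euclidean $1$-mean in time $2^{(1/\delta)^{O(1)}}d$.

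The invariant, case split, and induction carry over verbatim with the following adjustments to the calculations. In Case~1 (nearby center), one uses the fact that if $\d(c_1,\tilde c_1) \le \delta r$ then for every $p$, $\d_{r'}(p,\tilde c_1)^2 \le \d_{r'}(p,c_1)^2$ (from $r' = (1+\delta')r$), so the per-cluster bound is preserved. In Case~2.1 (tiny cluster), the reassignment bound uses the weak triangle inequality $(a+b)^2 \le (1+\delta)a^2 + (1+1/\delta)b^2$: the $(1+1/\delta)\d(c,c')^2$ term is charged to the $D$-cost, which is at least $|D|\cdot (t/4-r)^2 = \Omega(|D|t^2)$, and the case assumption $|N| \le \delta^4|D|$ (strengthened from $\delta^2$ to absorb the squared triangle loss) again makes this charge negligible. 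In Case~2.2, \Cref{cl:firstclaim} is replaced by the analogous inequality for squared distances, which holds because on $\Lfar$, $\d(p,c^\star)^2 = \d_r(p,c^\star)^2\cdot (1+O(\delta))$, and on $\Lnear$ the total squared contribution is still dominated by $\Lfar$ under the strengthened ratio between $|\Lnear|$ and $|\Lfar|$. \Cref{cl:approx-hybrid-equiv} then goes through with the $1$-\textsc{Means} optimality of $c_1$ in place of $1$-\textsc{Median}.

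The main obstacle is the loss of the triangle inequality for $\d_r^2$: every step that previously absorbed an additive $\d(c,c')$ term into $\d_r(p,c)$ now absorbs a cross-term via the relaxed inequality $(a+b)^2 \le (1+\delta)a^2 + (1+1/\delta)b^2$, so the various $\delta$-thresholds (size of $\Nnear$ vs.\ $\Nfar$, cut-off for $\Lnear$ vs.\ $\Lfar$, lower-bound of $|L|/|\cli_{q^\star}|$, minimal distance $\d(c,c')>16r$) must be re-tuned by a constant power of $\delta$ so that every cross-term is charged to either the $D$-cost or the $L$-cost without exceeding the induction budget $(1+\delta)^{k-m+1}\OPT_r$. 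Once these constants are set, the sample-size bound $\beta'$, the recurrence for $T(m)$, and the probability amplification in \Cref{lem:runtime} are unchanged, and the same final running time $2^{(kd/\varepsilon)^{O(1)}}\cdot n^{O(1)}$ follows.
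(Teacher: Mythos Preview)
Your proposal is correct and follows essentially the same approach as the paper: the paper's own treatment of this theorem is a one-paragraph sketch noting that the analysis adapts to squared distances by appropriately tuning thresholds and replacing the $1$-\textsc{Median} subroutine of \Cref{prop:samplemedian} with the analogous sample-based $1$-\textsc{Means} routine from~\cite{KumarSS10}. In fact you have supplied considerably more detail than the paper does---in particular, your explicit identification of the relaxed triangle inequality $(a+b)^2 \le (1+\delta)a^2 + (1+1/\delta)b^2$ as the tool for absorbing cross-terms, and the corresponding need to strengthen the size thresholds by a polynomial in $\delta$, is exactly the kind of adjustment the paper leaves implicit.
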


More generally, one can also define \textsc{Hybrid $(k, z)$-Clustering} analogously, where the threshold-distances feature the $z$-th power of distances. Again, our approach easily extends to this problem, modulo a version of \Cref{prop:samplemedian} for the vanilla {\sc $(k, z)$-Clustering} in Euclidean spaces. To the best of our knowledge, such an algorithm is not explicitly known in the literature; however, it may be possible to obtain such an algorithm using the approach of \cite{JKS2014,KumarSS10}.

\section{Conclusion and Future Directions} \label{sec:conclusion}

In this paper, we proposed a novel clustering objective and defined a new problem, called \probname, that generalizes both \kmed and  \kcenter. For $d$-dimensional euclidean inputs, we designed a randomized $(1+\varepsilon, 1+\varepsilon)$-bicriteria approximation scheme for \probname running in time $2^{(kd/\varepsilon)^{\Oh(1)}}$, for any $\varepsilon > 0$. Further, essentially the same algorithm also generalizes for a hybrid objective of $k$-\textsc{Center} and $k$-\textsc{Means}. We remind that improving either of the two $(1+\varepsilon)$ factors to $1$ would imply an exact FPT algorithm for \kcenter/\textsc{Median(/Means)} in Euclidean spaces, which is unlikely to exist.

Our work opens up several interesting research directions. An immediate question is whether improving or removing the FPT dependence on the dimension $d$ is possible, similar to the approach in \cite{KumarSS10} for \kmed/\textsc{Means}.
One potential direction for achieving this could be the recent result that imports the famous Johnson-Lindenstrauss dimension reduction technique to $k$-clustering problems \cite{MakarychevMR19}. 
Another intriguing question is the design of coresets for \probname, which could also have some implications for the previous problem via the approach of \cite{Charikar22JL}. However, at a high level, designing coresets for \probname appears to be challenging, since \emph{a priori} we do not know which points belong inside the radius-$r$ balls (and thus contribute $0$ to the cost), and which ones lie outside, and hence their cost needs to be approximately preserved. 

Finally,  considering \probname with inputs from arbitrary metric spaces, a primal-dual algorithm from \cite{ChakrabartyS18} can be adapted to obtain an $(\alpha, \beta)$-bicriteria approximation in polynomial time, for some constants $\alpha$ and $\beta$ \footnote{A quick examination of the proof of \cite{ChakrabartyS18} suggests that $(18, 6)$-bicrtieria approximation easily follows, with further improvements possible with more careful analysis.}. 
Exploring the best possible constants in the bicriteria approximation would be an interesting avenue for future research.


\end{document}